\newtheorem{theorem}{Theorem}
\newtheorem{lemma}[theorem]{Lemma}
\theoremstyle{definition}
\newtheorem*{lemma*}{Lemma}
\newcommand{\smallqed}{\hfill {\tiny ($\diamond$)} \medskip}
\newcommand{\2}{\vspace{0.2cm}}
\newcommand{\AY}[1]{{#1}}
\newcommand{\GG}[1]{{#1}}
\newcommand{\AZ}[1]{{#1}}
\newcommand{\add}[1]{a^*_{#1}}
\newcommand{\del}[1]{d^*_{#1}}
\newcommand{\addX}[1]{a^+_{#1}}
\newcommand{\delX}[1]{d^+_{#1}}
\title{(1,1)-Cluster Editing is Polynomial-time Solvable}
\author{Gregory Gutin\thanks{Department of Computer Science. Royal Holloway University of London. {\tt g.gutin@rhul.ac.uk}.} \and Anders Yeo \thanks {Department of Mathematics and Computer Science, University of Southern Denmark, Denmark and Department of Pure and Applied Mathematics,
University of Johannesburg, South Africa. {\tt andersyeo@gmail.com.}}}
\begin{document}

\maketitle

\begin{abstract}
A graph $H$ is a {\em clique graph} if $H$ is a vertex-disjoin union of cliques.  Abu-Khzam (2017) introduced the  $(a,d)$-{Cluster Editing} problem, where for fixed natural numbers $a,d$, given 
a graph $G$ and vertex-weights $\add{}:\ V(G)\rightarrow \{0,1,\dots, a\}$ and  $\del{}:\ V(G)\rightarrow \{0,1,\dots, d\}$, 
we are to decide whether $G$ can be turned into a cluster graph 
by deleting at most $\del{}(v)$ edges incident to every $v\in V(G)$ and adding at most $\add{}(v)$ edges incident to every $v\in V(G)$. Results by Komusiewicz and Uhlmann (2012) and Abu-Khzam (2017) provided a dichotomy of complexity (in P or NP-complete) of $(a,d)$-{Cluster Editing} for all pairs $a,d$ apart from $a=d=1.$ Abu-Khzam (2017) conjectured that $(1,1)$-{Cluster Editing} is in P. 
We resolve Abu-Khzam's conjecture in affirmative by (i) providing a series of five polynomial-time reductions to $C_3$-free and $C_4$-free graphs of maximum degree at most 3, and (ii)  designing a polynomial-time algorithm for solving $(1,1)$-{Cluster Editing} on $C_3$-free and $C_4$-free graphs of maximum degree at most 3.
\end{abstract}

\section{Introduction}

A graph $H$ is a {\em clique graph} if $H$ is a vertex-disjoin union of cliques. Addition or deletion of an edge to a graph is called an {\em edge edition}.
Given a graph $G$ and an integer $k \ge 0,$ the Cluster Editing problem asks whether $k$ or less edge editions can transform $G$ into a clique graph. 
Cluster Editing is {\sf NP}-Complete \cite{KrivanekM86,ShamirST04}, but it is fixed-parameter tractable if parameterized by $k$ \cite{Cai96}. The parameterization 
has received considerable attention \cite{Bocker12,BockerBBT09,BockerBK11,CaoC12,ChenM12,GrammGHN05,Guo09}. In particular,  an $O^*(1.618^k)$-time algorithm was designed in \cite{Bocker12} and a kernel with at most $2k$ vertices in \cite{ChenM12}. Unfortunately, $k$ is often not  small for real-world instances. For example, in a protein similarity data set that has been frequently used for evaluating Cluster Editing algorithms, the instances with $n \ge 30,$ where $n$ is the number of vertices, have an average number of edge editions between $2n$ and $4n$ \cite{BockerBBT09}. This has led to introduction of \AZ{multi-parameter} parameterizations in \cite{Abu-Khzam2017,KomusiewiczU2012}, \AZ{which turned out to be very useful in solving cluster editing problems in practice, see e.g. \cite{Barr0AC19,BarrSATY20,ShawBA22}.}

We will consider the following variation of {\sc Cluster Editing} problem introduced by Abu-Khzam \cite{Abu-Khzam2017}.
In $(a,d)$-{\sc Cluster Editing}, for fixed natural numbers $a$ and $d$,
given a graph $G$ and vertex-weights $\add{}:\ V(G)\rightarrow \{0,1,\dots, a\}$ and  $\del{}:\ V(G)\rightarrow \{0,1,\dots, d\}$, 
decide whether $G$ can be turned into a cluster graph 
by deleting at most $\del{}(v)$ edges incident to every $v\in V(G)$ and adding at most $\add{}(v)$ edges incident to every  $v\in V(G)$. Note that
there is no upper bound on the total number of additions and deletions, see Paragraph 7 of Section 2 and Paragraph 1 of Section 5.1 in \cite{Abu-Khzam2017}.
$(a,d)$-{\sc Cluster Editing} is similar to a version of  {\sc Cluster Editing} studied by Komusiewicz and Uhlmann \cite{KomusiewiczU2012}.
In particular, they proved that $(0,d)$-{\sc Cluster Editing} is {\sf NP}-hard for every fixed integer $d\ge 2.$

Abu-Khzam \cite{Abu-Khzam2017} proved additional results on $(a,d)$-{\sc Cluster Editing} and
established computational complexity of $(a,d)$-{\sc Cluster Editing} for every value of 
$a$ and $d$ apart from $a=d=1$: $(a,d)$-{\sc Cluster Editing} is in {\sf P} when $(a,d)\in \{(a,0), (0,1):\  a\in \{0,1,2\}\}$ and {\sf NP}-complete, otherwise, apart from possibly the case of $a=d=1$. 
In fact, Abu-Khzam \cite{Abu-Khzam2017} conjectured that $(1,1)$-{\sc Cluster Editing} is polynomial-time solvable. 

In this paper we resolve Abu-Khzam's conjecture in affirmative using \AZ{the following two-stage approach:
\begin{description}
\item[Stage 1] A polynomial-time self-reduction to $(1,1)$-{\sc Cluster Editing} on a special class of $\{C_3,C_4\}$-free graphs of maximum degree at most 3.
\item[Stage 2] Solving $(1,1)$-{\sc Cluster Editing} on this special graph class.
\end{description}
}

\AZ{Note that Stage 1 is actually a series of five polynomial-time reductions and is technically significantly harder than Stage 2. Thus the remaining sections of the paper are organized as follows. In Section \ref{sec:scheme} we introduce and briefly overview the series of reductions of Stage 1 leaving all the details to Section \ref{sec:5}. Stage 2 is described in Section \ref{sec:pta}. We conclude the paper with a discussion in Section \ref{sec:disc}. 
}

\vspace{2mm} 

We conclude this section by additional terminology and notation.

An edge $uv$ of $G$ is {\em deletable (in $G$)} if $\del{}(u)=\del{}(v)=1$ and {\em non-deletable}, otherwise. 
An edge $uv$ of the complement $\overline{G}$ of $G$ is {\em addable (to $G$)} if $\add{}(u)=\add{}(v)=1$ and {\em non-addable}, otherwise. 
Now we reformulate $(1,1)$-{\sc Cluster Editing} as follows:
Given a graph $G$, decide whether $G$ has a matching $D$ of edges deletable in $G$ and $\overline{G}$  has a matching $A$ of edges addable to $G$ such that $G-D+A$ is a clique graph. We call it the {\em matching} formulation of $(1,1)$-{\sc Cluster Editing} and the desirable clique graph $G-D+A$ a {\em solution}. Note that we will often write a solution in the form $G-D+A$, which defines the matchings $D$ and $A$ rather than the resulting cliques. 

We will denote by $P_n$ ($C_n$, respectively) a path (cycle, respectively) on $n$ vertices, by $K_n$ a complete graph on $n$ vertices, and by $K_{n,m}$ a complete bipartite graph with partite sets of sizes $n$ and $m.$ 
\AZ{We say that a graph $G$ is {\em $C_3$-free} if it does not contain a cycle $C_3$ as a subgraph
and is {\em $C_4$-free} if it does not contain a cycle $C_4$ as a subgraph. A graph $G$ is $\{C_3,C_4\}$-free if it is both $C_3$-free and $C_4$-free.}
For a positive integer $p$, let $[p]=\{1,2,\dots , p\}.$

\AZ{ \section{Scheme of series of reductions to special $\{C_3,C_4\}$-free graphs of maximum degree at most 3}\label{sec:scheme}

For two sets ${\cal G}$ and ${\cal H}$ of graphs, we say that $(1,1)$-{\sc Cluster Editing} can be {\em reduced} 
from ${\cal G}$ to ${\cal H}$ if for all $G\in {\cal G}$ and its vertex-weights $\add{},\del{}$,
we can either decide in polynomial time in $|V(G)|$ whether $(G,\add{},\del{})$ is a YES-instance or NO-instance, or construct $H \in {\cal H}$ with vertex-weights $\addX{},\delX{}$ such that $(G,\add{},\del{})$ is a YES-instance of $(1,1)$-{\sc Cluster Editing} if and only if $(H,\addX{},\delX{})$ is a YES-instance of $(1,1)$-{\sc Cluster Editing}. 
Note that as the reduction is a polynomial-time algorithm, the order of $H$ is also bounded above by a polynomial in the order of $G$.
When the vertex-weights are clear from the context, we will often say that $G$ is an instance of the problem (i.e., of $(1,1)$-{\sc Cluster Editing}). 


Let ${\cal G}_1$ denote all graphs and let 
${\cal G}_2$ denote all $C_3$-free graphs of maximum degree at most 3. The first in our series of five reductions is as follows.  

\begin{lemma}\label{lem:tf}
 $(1,1)$-{\sc Cluster Editing} can be reduced from ${\cal G}_1$ to ${\cal G}_2$.
\end{lemma}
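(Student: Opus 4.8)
The plan is to eliminate all triangles from $G$ by a reduction rule that exploits the rigidity a triangle imposes on any solution, and then to dispose of high-degree vertices by a short argument. I would first record two facts about an arbitrary solution $G-D+A$ of an instance $(G,\add{},\del{})$: \emph{(i)} since $D$ is a matching, the three vertices of every triangle of $G$ lie in a common clique of $G-D+A$ (otherwise some triangle edge, say $uv$, is deleted, placing $u,v$ in distinct cliques; then $w$ lies outside at least one of them, so $uw$ or $vw$ is deleted too, giving two $D$-edges at a common vertex); and \emph{(ii)} for every clique $Q$ of $G-D+A$, no edge of $G[Q]$ lies in $D$, every non-edge of $G[Q]$ lies in $A$ (so $G[Q]$ is $K_{|Q|}$ minus a matching), and every edge of $G$ between $Q$ and $V(G)\setminus Q$ lies in $D$ — hence each vertex of $Q$ has at most one $G$-neighbour outside $Q$, and each vertex outside $Q$ at most one $G$-neighbour in $Q$.

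The heart of the argument is that a triangle determines its clique uniquely. Let $uvw$ be a triangle of $G$ and $Q$ the clique of $G-D+A$ containing it. By \emph{(ii)}, any $x\in Q\setminus\{u,v,w\}$ fails to be adjacent to at most one of $u,v,w$, so $x$ has at least two neighbours in $\{u,v,w\}$; conversely, if $x\notin\{u,v,w\}$ has at least two neighbours in $\{u,v,w\}$ then $x\in Q$, since otherwise two edges from $x$ into $Q$ would both lie in the matching $D$. Therefore
\[
  Q \;=\; S \;:=\; \{u,v,w\}\cup\bigl\{\,x\in V(G):\ |N_G(x)\cap\{u,v,w\}|\ge 2\,\bigr\},
\]
a set computable in polynomial time and independent of the chosen solution.

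This yields the reduction rule. Given a triangle $uvw$, compute $S$ and test: \emph{(F1)} every vertex of $S$ has at most one non-neighbour in $S$, and $\add{}(a)=\add{}(b)=1$ for every non-edge $ab$ with $a,b\in S$; \emph{(F2)} every $x\notin S$ has at most one neighbour in $S$; \emph{(F3)} every $q\in S$ has at most one neighbour outside $S$; \emph{(F4)} $\del{}(q)=\del{}(y)=1$ for every edge $qy$ with $q\in S$, $y\notin S$. Each condition is necessary by \emph{(i)}--\emph{(ii)}, so if any fails we report a NO-instance; otherwise we continue with the instance $(G-S,\add{}',\del{}')$, where $G-S$ is obtained by deleting the vertices of $S$, $\add{}'$ is the restriction of $\add{}$ to $V(G)\setminus S$, and $\del{}'(x)=0$ if $x$ has a neighbour in $S$ and $\del{}'(x)=\del{}(x)$ otherwise. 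One checks the equivalence in both directions: from a solution of $G-S$ one builds a solution of $G$ by turning $S$ into a clique — add the matching of non-edges inside $S$, delete the matching of edges between $S$ and $V(G)\setminus S$ — the budgets working out precisely because of \emph{(F1)}--\emph{(F4)}; and any solution of $G$ restricts to one of $G-S$ because $Q=S$, the edges it deletes between $S$ and the rest being exactly those whose endpoints receive deletion budget $0$ in $\del{}'$. Since $|S|\ge 3$ and deleting vertices creates no new edges (hence no new triangles), iterating the rule terminates after at most $|V(G)|$ steps and produces, in polynomial time, an equivalent $C_3$-free instance $G'$ (unless a NO-answer was already reported).

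It remains to bound the degree. If $G'$ has a vertex $v$ with $d_{G'}(v)\ge 4$, then $G'$ is a NO-instance: in any solution the clique $Q$ containing $v$ has, by \emph{(ii)}, at least $d_{G'}(v)-1\ge 3$ neighbours of $v$, and among any three of them some two are adjacent in $G'$ (again by \emph{(ii)}), so these two together with $v$ form a triangle of the $C_3$-free graph $G'$ — hence we answer NO. Otherwise $\Delta(G')\le 3$, and $G'\in{\cal G}_2$ is the required instance. Conceptually the only delicate point is the identity $Q=S$; the bulk of the write-up — and the place where an error is most likely to hide — is verifying that \emph{(F1)}--\emph{(F4)} are exactly the right local obstructions and that the deletion and addition budgets transfer correctly to $G-S$.
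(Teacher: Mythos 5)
Your proposal is correct and follows essentially the same route as the paper: a triangle forces the clique of any solution that contains it, one checks the induced local feasibility conditions and removes that clique (declaring NO if a check fails), and afterwards any vertex of degree at least $4$ in the resulting $C_3$-free graph yields a NO-instance. The only cosmetic difference is that you identify the forced clique in one shot as $S=\{u,v,w\}\cup\{x:\ |N_G(x)\cap\{u,v,w\}|\ge 2\}$ with conditions (F1)--(F4), whereas the paper grows the same set iteratively while performing the forced deletions and additions along the way.
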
 

Let ${\cal G}_3$ denote all $C_3$-free graphs of maximum degree at most 3 which contain no $4$-cycle, all of whose vertices have degree 3.
 
\begin{lemma}\label{lem:noC4with4deg3}
 $(1,1)$-{\sc Cluster Editing} can be reduced from ${\cal G}_2$ to ${\cal G}_3$ in polynomial time.
\end{lemma}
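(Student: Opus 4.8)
The plan is to eliminate bad $4$-cycles one at a time by a bounded local replacement, iterating until the instance lies in ${\cal G}_3$, and in a handful of degenerate configurations to decide the instance outright. So fix $G\in{\cal G}_2$ containing a $4$-cycle $Z=v_1v_2v_3v_4v_1$ all of whose vertices have degree $3$, and for $i\in[4]$ let $u_i$ be the neighbour of $v_i$ outside $Z$; since $G$ is $C_3$-free, $u_i\neq u_{i+1}$ (indices mod $4$).

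First I would pin down how a solution $G-D+A$ looks on $Z$. As an induced subgraph of a clique graph is a clique graph and $D$ is a matching, the edges of $G-D+A$ among $v_1,v_2,v_3,v_4$ are obtained from $C_4$ by deleting a matching of its edges and adding a subset of $\{v_1v_3,v_2v_4\}$; checking the few cases, this is a clique graph only in one of two ways:
\begin{description}
\item[Case A:] both diagonals $v_1v_3,v_2v_4$ lie in $A$ and no edge of $Z$ is deleted, so $\{v_1,v_2,v_3,v_4\}$ spans a $K_4$. Using $C_3$-freeness and $\deg_G v_i=3$ one further shows this $K_4$ is the whole clique of the $v_i$, so all four edges $v_iu_i$ lie in $D$; hence Case A can occur only when $u_1,u_2,u_3,u_4$ are pairwise distinct and every $v_iu_i$ is deletable.
\item[Case B:] exactly two opposite edges of $Z$ lie in $D$, and no diagonal is added; here the other two ($Z$-)edges are kept, so their endpoints pair up into two distinct clique fragments, into which some of the $u_i$ may also be pulled.
\end{description}

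Given this dichotomy I would form the replacement $H$ by deleting the four vertices of $Z$ (hence also the edges $v_iu_i$) and attaching to $u_1,u_2,u_3,u_4$ a fixed gadget on new vertices of degree at most $3$, designed so that its admissible behaviours at the interface are exactly: (i) an ``$A$-mode'' in which each $u_i$ is forced to spend one deletion on a gadget edge and no gadget vertex is merged into an outside clique — reproducing the self-contained $K_4$ of Case A and its effect on the budgets of the $u_i$'s; and (ii) a ``$B$-mode'' in which $u_1,u_4$ (respectively $u_2,u_3$), or the opposite pairing, may join a common clique fragment exactly as the kept edges of Case B allow, with the matching budget bookkeeping. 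The gadget would be chosen so that every cycle meeting it passes through a gadget vertex of degree $\le 2$; then $H\in{\cal G}_2$, $Z$ is destroyed, and no new $4$-cycle all of whose vertices have degree $3$ is created. Before this, I would dispose of the degenerate cases in which some $u_i=u_j$ or some $v_iu_i$ is non-deletable: each of these rules out Case A and either makes the relevant part of $G$ bounded (e.g.\ the component of $Z$ is $K_{3,3}$ when $u_1=u_3$, $u_2=u_4$ and $u_1u_2\in E(G)$), so that the instance can be decided in polynomial time, or permits an even simpler local surgery.

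Termination is then routine: each replacement strictly decreases a fixed potential, say the number of bad $4$-cycles, so only polynomially many steps occur, each polynomial-time, and the correctness of a step amounts to matching Case-A solutions of $G$ with $A$-mode solutions of $H$ and Case-B solutions with $B$-mode solutions, in both directions. I expect the real obstacle to be precisely the gadget: producing one fixed small graph whose solution space, restricted to the four interface vertices, realises the rigid Case-A behaviour and the paired Case-B behaviour and nothing else, while respecting $C_3$-freeness, the ``no $C_4$ with all vertices of degree $3$'' condition, and the degree bound — together with the somewhat lengthy verification that the correspondence is exact across all sub-cases, in particular handling how the remaining edges at a given $u_i$ interact with the clique fragments in Case B. That verification is what I would expect to occupy most of Section~\ref{sec:5}.
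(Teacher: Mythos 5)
Your opening analysis is sound and matches the paper's first step: the dichotomy you derive on the bad $4$-cycle (either both diagonals are added, all four pendant edges $v_iu_i$ are deleted and $\{v_1,\dots,v_4\}$ becomes its own $K_4$, or exactly two opposite cycle edges are deleted and no diagonal is added) is precisely Claim~A of the paper's proof, and your observation that the first case forces the $u_i$ to be distinct and the edges $v_iu_i$ deletable is also correct (distinctness is otherwise a NO-instance via a $K_{2,3}$ subgraph, cf.\ Lemma~\ref{lem:K23}).

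However, from that point on the proposal has a genuine gap: the entire content of the lemma is delegated to a gadget that you never construct. You write that you ``would form'' $H$ by attaching to $u_1,\dots,u_4$ ``a fixed gadget \dots designed so that its admissible behaviours at the interface are exactly'' the A-mode and B-mode, and you close by conceding that producing such a gadget and verifying the exact solution correspondence ``is what I would expect to occupy most of Section~\ref{sec:5}.'' That is exactly the part that has to be proved, and it is not routine. In particular, whether the B-mode is feasible in $G$ depends on which edges among the $u_i$ (the paper's $w_i$, edges $e_1,\dots,e_4$) already exist, and on the second neighbourhood of the $u_i$ (the paper's $z_i$ vertices): the paper first has to prove, via a sequence of case analyses and R-deletable-set reductions (its Claims~B--H), that one may assume $e_2,e_4\in E(G)$, $e_1,e_3\notin E(G)$, that the $z_i$ are distinct and outside $V_8$, and that at most one $z$-vertex exists on each side, before any local surgery can be defined; and the surgery it then performs is not one uniform gadget but two different structure-dependent replacements (Claim~G and the final step), each with a two-directional solution-correspondence argument. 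Your treatment of the degenerate cases (``either makes the relevant part of $G$ bounded \dots or permits an even simpler local surgery'') is likewise unspecified, and in several of them the correct conclusion is not a bounded component or a surgery but an outright NO-instance or a reduction by deleting an R-deletable set. Finally, your termination argument (``the number of bad $4$-cycles strictly decreases'') presupposes properties of the unconstructed gadget (no new triangles, no new all-degree-$3$ $4$-cycles, degree bounds at the interface when $u_i$'s are adjacent or share neighbours) that are precisely the interactions the missing construction must control. As it stands, the proposal is a plausible plan whose key construction and verification are absent, so it does not constitute a proof of the lemma.
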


Let ${\cal G}_4$ denote all $C_3$-free graphs of maximum degree at most 3 which contain no $4$-cycle, with at least three vertices of degree three.
 
\begin{lemma}\label{lem:noC4with3deg3}
 $(1,1)$-{\sc Cluster Editing} can be reduced from ${\cal G}_3$ to ${\cal G}_4$ in polynomial time.
\end{lemma}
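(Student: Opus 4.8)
The plan is to start with an instance $G \in {\cal G}_3$ — a $\{C_3,C_4\}$-free graph of maximum degree at most $3$ with no $4$-cycle all of whose vertices have degree $3$ (the latter condition is automatic here since $G$ is already $C_4$-free) — and reduce to an instance in ${\cal G}_4$, i.e.\ one that in addition has at least three vertices of degree three. So the real content is: if $G$ has at most two vertices of degree $3$, then either we can decide the instance outright in polynomial time, or we can massage $G$ into an equivalent instance with $\ge 3$ degree-$3$ vertices (or so few edges that the instance is trivial). I would first dispose of the case where $G$ has at most two degree-$3$ vertices by a structural analysis: such a graph is, after removing isolated vertices, a disjoint union of paths, cycles, and at most two ``small'' components each consisting of a few paths glued at one or two branch vertices. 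Since $(1,1)$-{\sc Cluster Editing} decomposes over connected components (a solution restricts to each component independently — here one must check that no added edge of a solution crosses components, which is immediate because adding an edge between two vertices $u,v$ in different components would force $u$ and $v$ into a common clique, requiring either deletions beyond budget $1$ or creating a non-clique), it suffices to solve each component.

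For each component that is a path or a cycle (of any length), I would give a direct polynomial-time algorithm: on $P_n$ and $C_n$ the admissible cliques in any solution have size at most $3$ (a $K_4$ needs $6$ edges, far beyond what local degree $2$ plus one addition allows; even a triangle uses one added edge at two of its vertices), so one can sweep along the path/cycle with a constant-state dynamic program tracking which of the last few vertices already used their single addition and their single deletion, and which clique (singleton, $K_2$, or $K_3$) each lies in. This handles all components except the at most two special ones; for those — each having $O(1)$ vertices of degree $3$, but possibly long pendant paths — I would argue that long pendant paths can be truncated: a pendant path of length more than a fixed constant $c$ attached at a vertex $w$ forces, far enough from $w$, the same ``eventually singletons or alternating $K_2$'s'' behaviour as an isolated path, so we can shorten it to length $c$ while preserving YES/NO-ness (a standard replacement argument: match the forced partial solution on the tail). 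After truncation each special component has $O(1)$ vertices and is solved by brute force. Thus if $G$ has $\le 2$ degree-$3$ vertices we decide the instance in polynomial time, and otherwise $G$ already lies in ${\cal G}_4$ and we output $H = G$ with $\addX{} = \add{}$, $\delX{} = \del{}$.

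The step I expect to be the main obstacle is the truncation/replacement argument for pendant (and, more generally, ``thin'') paths inside the special components: one must show precisely how a long path of low-degree vertices behaves in \emph{every} optimal solution — in particular that after a bounded prefix the structure is periodic, so that cutting it to a bounded length and transferring the forced partial matchings $D,A$ is safe in both directions. This requires a careful case analysis of how a $K_2$ or $K_3$ ``sitting on'' the path interacts with the deletion budget of its endpoints, and of the boundary effects where the path meets the rest of the component. Everything else — the component decomposition, the no-crossing-edge observation, and the constant-state dynamic program on paths and cycles — is routine once the admissible clique sizes are pinned down. \smallqed
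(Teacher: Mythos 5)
Your proposal rests on a misreading of the classes ${\cal G}_3$ and ${\cal G}_4$. In the paper the qualifying clause attaches to the forbidden $4$-cycles, not to the whole graph: ${\cal G}_3$ is the class of $C_3$-free graphs of maximum degree at most $3$ containing no $4$-cycle \emph{all four of whose vertices have degree $3$}, and ${\cal G}_4$ is the class containing no $4$-cycle \emph{with at least three vertices of degree $3$}. In particular, graphs in ${\cal G}_3$ and ${\cal G}_4$ are generally \emph{not} $C_4$-free; $C_4$-freeness is only reached at ${\cal G}_5$, and indeed the proof of Lemma~\ref{lem:noC4} picks an arbitrary $4$-cycle in $G\in{\cal G}_4$ and uses precisely that at most two of its vertices have degree three. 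So the actual content of Lemma~\ref{lem:noC4with3deg3} is: given $G\in{\cal G}_3$, eliminate every $4$-cycle having exactly three degree-$3$ vertices. Your argument never addresses such cycles (you assume $C_4$-freeness from the start), and your final step --- ``if $G$ has at least three degree-$3$ vertices, output $H=G$'' --- is false under the correct definitions, since such a $G\in{\cal G}_3$ may contain a $4$-cycle with exactly three degree-$3$ vertices and hence not lie in ${\cal G}_4$.

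Consequently the machinery you develop (component decomposition, constant-state dynamic programming on paths and cycles, truncation of long pendant paths) solves a different and much easier question than the one the lemma asks, and none of it yields the required reduction. The paper's proof instead takes a $4$-cycle $C=v_1v_2v_3v_4v_1$ with $d(v_1)=d(v_2)=d(v_3)=3$, $d(v_4)=2$ and outside neighbours $w_1,w_2,w_3$, classifies the possible intersections of the deleted matching $D$ with $C$ and the pendant edges (Claim~A), deduces structural constraints on the edges $w_1w_2,w_2w_3$ and the degrees of the $w_i$ (Claims~B--D), handles degenerate cases via R-deletable sets and Lemma~\ref{lem:Rdel}, and finally replaces $\{v_1,v_2,v_3,v_4,w_{3-s}\}$ by a small path gadget with a new vertex $x$ having $d^*(x)=0$, proving YES/NO-equivalence in both directions. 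None of these ideas (or substitutes for them) appear in your proposal, so there is a genuine gap: the statement you prove is not the statement of the lemma. (Separately, even on its own terms your sketch leaves the truncation/replacement argument for long pendant paths unproved, but that is secondary to the misidentification of the target classes.)
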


Let ${\cal G}_5$ denote all $\{C_3,C_4\}$-free graphs of maximum degree at most 3.
 
\begin{lemma}\label{lem:noC4}
 $(1,1)$-{\sc Cluster Editing} can be reduced from ${\cal G}_4$ to ${\cal G}_5$ in polynomial time.
\end{lemma}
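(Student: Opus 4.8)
The plan is to remove the remaining $4$-cycles one at a time by a local gadget replacement, exploiting the strong structural restrictions that membership in $\mathcal{G}_4$ already imposes on such cycles. Note first that every vertex of a $C_4$ in $G\in\mathcal{G}_4$ has degree $2$ or $3$ (it lies on a cycle and $\Delta(G)\le 3$), and at most two of the four have degree $3$; hence every $C_4$ in $G$ has at least two vertices of degree $2$. If all four have degree $2$, then the $C_4$ is a connected component of $G$ and I would decide it directly: a $C_4$ on $v_1v_2v_3v_4$ is turned into a clique graph within the $(1,1)$-budget exactly when $\del{}(v_i)=1$ for all $i$ (split it into the two $2$-cliques obtained by deleting a perfect matching) or $\add{}(v_i)=1$ for all $i$ (complete it to $K_4$); since $C_4$ is $C_3$-free and at most one edge edition is allowed per vertex, no other clique partition is reachable. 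So we may assume the $C_4$ under consideration, $C=v_1v_2v_3v_4$, has one or two vertices of degree $3$, and accordingly the ``blob'' $B=\{v_1,v_2,v_3,v_4\}$ is joined to $R:=G-B$ by exactly one or exactly two edges, which we call the \emph{ports}.

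The reduction replaces $B$ together with its internal edges by a small gadget on new vertices, attached to $R$ through the same port vertices. To design the gadget and prove equivalence I would classify how a solution can behave on $B$. Each of the two degree-$2$ vertices of $C$ keeps all but at most one of its incident edges and gains at most one, so the clique of the solution through such a vertex is confined to a bounded vertex set (essentially $B$ together with one extra vertex); enumerating the options for both degree-$2$ vertices yields a finite list of local configurations. Each configuration is determined by its \emph{interface}: for each port $uv$ with $u\in B$, $v\in R$, whether $uv$ survives in the solution, whether the clique through $v$ reaches into $B$, and how much of the budgets $\add{}(v),\del{}(v)$ is spent inside $B$. Summing over configurations gives, for every interface state at the (at most two) ports, a single bit ``$B$ is internally resolvable under this interface''. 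This bit-table is then realised by a concrete $\{C_3,C_4\}$-free graph of maximum degree $\le 3$ hung off the port vertices of $R$: typically a short path or one or two pendant vertices with suitable weights, while in some cases the analysis shows the instance is a NO-instance, or that the port edges are forced and $B$ can simply be deleted after adjusting $\add{}/\del{}$ at the ports.

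Concretely the case analysis branches on (i) whether the two degree-$2$ vertices of $C$ are adjacent or opposite on $C$, (ii) whether $C$ has one or two degree-$3$ vertices (hence one or two ports), and (iii) the values of $\add{}/\del{}$ at the degree-$2$ vertices and at the port vertices; a few degeneracies (the two ports leading to the same vertex of $R$, or a port vertex being adjacent to another vertex of $B$) either produce a small separate component solved directly or create a $C_4$ with three degree-$3$ vertices, which is impossible in $\mathcal{G}_4$. For each surviving case one checks: (a) the gadget is equivalent, i.e.\ the modified instance is a YES-instance iff the original one is; (b) the modified graph is again in $\mathcal{G}_4$ — in particular no new $C_4$, \emph{a fortiori} none with three degree-$3$ vertices, is created around the gadget; and (c) the number of $4$-cycles strictly decreases. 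Property (c) lets us iterate: after a number of steps polynomial in $|V(G)|$ no $4$-cycle remains, so the resulting graph lies in $\mathcal{G}_5$, which is the required reduction.

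The main obstacle I expect is the completeness and correctness of this case analysis: one must account for \emph{every} way a solution can meet the blob — including interactions in which a port vertex extends its clique into $B$ while also spending budget outside $B$ — verify that the replacement gadgets are genuinely equivalent and genuinely $\{C_3,C_4\}$-free of maximum degree $3$, and confirm that no new $4$-cycle is ever introduced, so that the procedure simultaneously stays inside $\mathcal{G}_4$ and makes monotone progress toward $\mathcal{G}_5$.
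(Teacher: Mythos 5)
Your plan follows the same route as the paper's proof: use membership in ${\cal G}_4$ to conclude that any remaining $4$-cycle $C$ has at most two degree-$3$ vertices and hence at most two edges to the rest of the graph, dispose of the isolated-$C_4$ case directly, classify how a solution can meet $V(C)$, replace $V(C)$ by a small weighted gadget attached at the one or two port vertices, and iterate, using the strictly decreasing number of $4$-cycles as the progress measure (the paper even notes the edge count may grow by one per step, exactly as you anticipate).

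The gap is that your proposal stops where the lemma's actual content begins. You never derive the interface behaviour of a solution on the blob, and you never exhibit a gadget; the assertion that the resulting ``bit-table is then realised by a concrete $\{C_3,C_4\}$-free graph of maximum degree $\le 3$ \dots typically a short path or one or two pendant vertices with suitable weights'' is precisely what has to be proved, and it is not automatic in this constrained class. In the paper this is done concretely: a Claim-A-type trichotomy shows that in any solution either both port edges $xx',yy'$ are deleted and no cycle edge is, or neither port edge is deleted and one of the two opposite pairs of cycle edges is; consequently either both port edges lie in $D$, or neither does and each of $x',y'$ is the end-vertex of a $P_3$ whose other two vertices lie on $C$. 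This interface is then emulated, in the one-port case, by deleting two cycle vertices and setting $\del{}$ of the surviving degree-$2$ vertex to $0$, and, in the two-port case, by the path gadget $q_1\,x\,q_3\,q_4\,y\,q_6$ with $\del{}(q_1)=\del{}(q_6)=0$ attached via $xx'$ and $yy'$; both directions of the equivalence are then checked, as is the fact that the replacement creates no new $4$-cycle (so the graph stays in ${\cal G}_4$ and the iteration terminates). A concrete instance of what your unperformed analysis must settle is the role of the $\add{}$-budgets of the degree-$2$ vertices of $C$: whether the ``complete $C$ to $K_4$'' and the two ``split $C$'' states are actually available depends on $\add{}$ at the cycle vertices as well as at the ports (forcing a NO-instance, or forcing one side of the trichotomy, when some of these are $0$), and the gadget, whose internal weights you are free to choose, must neither offer an interface state the original blob cannot realise nor suppress one it can. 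Without the explicit gadgets, the trichotomy, and the two-directional equivalence and class-preservation checks, the proposal is a programme for a proof rather than a proof.
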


Let ${\cal G}_6$ denote all $\{C_3,C_4\}$-free graphs of maximum degree at most 3, such that the following holds.

\begin{itemize}
\item All vertices $v$ of degree at least 2 have $\del{}(v)=1$.
\item No vertex $v$ of degree 3 is adjacent to a vertex $w$ with $\add{}(w)=0$.
\item The vertices of degree 3 form an independent set.
\end{itemize}

Graphs in ${\cal G}_6$ are called {\em special} $\{C_3,C_4\}$-free graphs of maximum degree at most 3.

\begin{lemma}\label{lem:dZero}
 $(1,1)$-{\sc Cluster Editing} can be reduced from ${\cal G}_5$ to ${\cal G}_6$ in polynomial time.
 \end{lemma}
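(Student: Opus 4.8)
The plan is to start from a graph $G \in \mathcal{G}_5$ (so $G$ is $\{C_3,C_4\}$-free with maximum degree at most 3) and massage it through a bounded number of local modifications, each of which either decides the instance outright or produces an equivalent instance, until all three structural conditions defining $\mathcal{G}_6$ hold simultaneously. I would handle the three conditions essentially one at a time, being careful that fixing a later condition does not destroy an earlier one. For the first condition, consider a vertex $v$ with $\deg_G(v)\ge 2$ and $\del{}(v)=0$: no edge at $v$ is deletable, so in any solution $v$ must end up in a clique containing all of its current neighbours, and since $G$ is $C_3$-free those neighbours are pairwise non-adjacent, so each of them must gain at least one edge. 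Using that $\deg_G(v)\ge 2$ together with the degree bound, I can argue that the only way this can happen is if $v$ has degree exactly $2$, its two neighbours $x,y$ are addable to each other, and the solution is forced to create the triangle $\{v,x,y\}$ as a connected component (any larger clique through $v$ is impossible because $v$ cannot gain or lose edges and an additional vertex would need two new edges at $x$ or $y$ while also being joined to $v$). This lets me either reject the instance or delete $\{v,x,y\}$ from $G$ (after checking $x,y$ have no other neighbours, else reject) and recurse; each such step strictly decreases $|V(G)|$, so after polynomially many steps no vertex of degree $\ge 2$ has $\del{}$-weight $0$, while the remaining graph stays in $\mathcal{G}_5$.

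For the second condition I would look at a degree-3 vertex $v$ adjacent to a vertex $w$ with $\add{}(w)=0$. Since $w$ can gain no edges, in any solution $w$'s clique is a subset of $N_G[w]$; as $G$ is $\{C_3,C_4\}$-free, $w$'s neighbours form an independent set with no common neighbour other than $w$, so $w$'s clique is either $\{w\}$ alone (requiring all edges at $w$ to be deleted — impossible if $\deg_G(w)\ge 2$ since $D$ is a matching) or a single edge $\{w,u\}$ with all other edges at $w$ deleted, or, if $\deg_G(w)=1$, possibly $\{w\}$. Combining with the first condition (already established, so $\del{}(w)=1$ whenever $\deg_G(w)\ge 2$) pins down the local structure sharply: either we can decide the instance, or we can excise the component containing $v$ and $w$, or we can perform a gadget replacement that removes the offending adjacency without changing the answer. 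The cleanest uniform move is: whenever $\add{}(w)=0$ and $w$ has a degree-3 neighbour $v$, attach to $w$ a small pendant gadget and/or reroute so that $w$ is no longer forced the same way — but I expect the honest approach is a short case analysis on $\deg_G(w) \in \{1,2,3\}$ showing each case is either decidable or reduces to a strictly smaller instance, again in polynomially many steps and preserving $\mathcal{G}_5$ and the first condition.

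The third condition — making the degree-3 vertices an independent set — is where I would put the real work, and it is the step I expect to be the main obstacle. The issue is an edge $vw$ with $\deg_G(v)=\deg_G(w)=3$; I want to subdivide this edge (or insert a short path) to separate $v$ and $w$, but subdivision introduces new low-degree vertices and I must choose the $\addX{},\delX{}$ weights on the new vertices so that the $(1,1)$-behaviour is exactly simulated, and so that no new $C_3$ or $C_4$ is created (subdivision can only lengthen cycles, so girth is safe) and the earlier two conditions are not violated. The key point to verify is that replacing the deletable/non-deletable edge $vw$ by a path through fresh vertices $p_1,\dots,p_t$ with a suitable weight pattern yields an equivalent instance: intuitively, "$v$ and $w$ in the same clique via the edge $vw$'' must correspond to "$v,p_1,\dots,p_t,w$ all absorbed into one clique'', and "$vw$ deleted'' to "the path broken somewhere'', and one checks that with the right weights these are the only options and they match up. I would pick the smallest $t$ (likely $t=2$ or $t=3$, forced by the need to keep the new vertices from being degree-3 or from landing adjacent to a degree-3 vertex, and to keep the girth $\ge 5$) for which this simulation is faithful, prove the if-and-only-if by analysing how any solution must treat the inserted path, and observe that the construction is applied at most $|E(G)|$ times. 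Finally I would note that all three reductions run in polynomial time, the output is $\{C_3,C_4\}$-free of maximum degree at most 3, and the three bulleted properties hold, so the resulting instance lies in $\mathcal{G}_6$, completing the reduction from $\mathcal{G}_5$ to $\mathcal{G}_6$. \qed
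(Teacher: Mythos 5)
Your overall plan (handle the three bulleted conditions by repeated local reductions) is reasonable, but as written the proposal has two genuine problems. First, your treatment of a degree-$2$ vertex $v$ with $\del{}(v)=0$ is wrong: you reject unless the two neighbours $x,y$ of $v$ have no other neighbours. But edges from $x$ or $y$ to the rest of the graph do not force a NO-answer; they are simply forced into $D$, and the correct reduction must check that the corresponding deletions are permitted and then charge them to the outside endpoints (this is exactly what the paper's R-deletable-set reduction, Lemma~\ref{lem:Rdel}, does: all edges between $\{v,x,y\}$ and the rest go into $D$, the $\del{}$-budgets of the outside endpoints are decremented, and only then is $\{v,x,y\}$ removed). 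For instance, the path $z\,x\,v\,y$ with $\del{}(v)=0$ and all other weights equal to $1$ is a YES-instance (delete $xz$, add $xy$), yet your rule rejects it.

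Second, and more importantly, the core of the lemma -- the second and third conditions -- is only a plan, and the plan misses the one observation that makes the paper's proof short. Since components of $G-D$ must be $P_1$, $P_2$ or $P_3$ (Lemma~\ref{lem:posII}), a degree-$3$ vertex $u$ keeps at least two of its edges, so its component in $G-D$ is a $P_3$ centered at $u$. Hence if $uw\in E(G)$ with $d(u)=3$ and either $\add{}(w)=0$ or $d(w)=3$, then $uw$ must lie in $D$ in \emph{every} solution (otherwise $w$ is an end-vertex of that $P_3$ needing a forbidden addition, or a $P_4$ arises). One therefore deletes $uw$ outright and sets $\del{}(u)=\del{}(w)=0$, with no new vertices, and re-applies the earlier reductions; this is the paper's argument. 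Your subdivision gadget, by contrast, must force $u$ and $w$ to spend their deletions on the ends of the inserted path, and you never establish this. With the natural all-$1$ weights it is simply false: take $v,w$ adjacent of degree $3$, with pendant neighbours $a,b$ of $v$ and $c,d$ of $w$ and $\add{}(a)=0$; the original instance is NO (the forced deletion of $vw$ leaves the $P_3$ $a\,v\,b$, which cannot be completed), but after subdividing $vw$ into $v\,p_1\,p_2\,w$ with $\add{}(p_i)=\del{}(p_i)=1$ one can delete $\{va,\,p_1p_2,\,wc\}$ and add $\{bp_1,\,dp_2\}$, a YES-instance. Forcing the deletions by setting $\add{}(p_1)=\add{}(p_2)=0$ restores equivalence but recreates exactly the violation you are trying to remove (an $\add{}$-zero vertex adjacent to a degree-$3$ vertex), so the iteration does not terminate in the class ${\cal G}_6$. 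As it stands, the reduction from ${\cal G}_5$ to ${\cal G}_6$ is not established.
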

}
\section{Polynomial-time algorithm for  \AZ{special} $\{C_3,C_4\}$-free graphs of maximum degree at most 3}\label{sec:pta}

Lemma \ref{lem:posI} of Section \ref{sec:5} implies the following lemma. \AZ{In fact, it is easy to prove Lemma \ref{lem:posII} directly.} 

 \begin{lemma}\label{lem:posII}
 Let $G$ be a $\{C_3,C_4\}$-free graph with maximum degree at most 3. 
Then $(G,\add{},\del{})$ is a YES-instance of $(1,1)$-{\sc Cluster Editing} if and only if there is a matching $D$ of deletable edges from $G$ such that every connected 
component $C$ of $G-D$ is isomorphic to $P_1$ or $P_2$ or $P_3$ and if $C\cong P_3$ then the edge of $\overline{G}$ between the end-vertices of $C$ is addable
to $G$.
 \end{lemma}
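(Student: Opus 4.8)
The plan is to prove both directions by analyzing the structure of a cluster graph $G-D+A$ that is a solution in the matching formulation. First I would observe that in any solution, the matching $D$ of deletable edges, being a matching, removes at most one edge incident to each vertex; likewise $A$ adds at most one edge incident to each vertex. Since $G$ has maximum degree at most $3$, in $G-D$ every vertex still has degree at most $3$, and adding the matching $A$ can raise a vertex's degree by at most one. Hence every clique $K$ in the solution $G-D+A$ has maximum degree at most $4$, so $|V(K)| \le 5$; I would then rule out $|V(K)| \in \{4,5\}$: a $K_4$ in the solution would need, for each of its four vertices, three incident edges inside $K_4$, but at most one of these three edges per vertex can be an added edge (since $A$ is a matching), so each vertex of $K_4$ retains at least two edges of $K_4$ in $G-D$, forcing $G[V(K_4)]$ to contain at least $4$ edges on $4$ vertices — and a $C_3$- and $C_4$-free graph on $4$ vertices has at most $3$ edges, a contradiction; the $K_5$ case is subsumed. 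Thus every component of $G-D+A$ is $P_1$, $P_2$, or a triangle $K_3$.

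Next I would handle the $K_3$ case, which is the crux. If $\{x,y,z\}$ spans a clique in $G-D+A$, the three edges $xy,yz,zx$ are each either in $G-D$ or in $A$. At most one of them is in $A$ (matching), and since $G$ is $C_3$-free not all three can lie in $G$, hence not all three lie in $G-D$; so exactly one lies in $A$ and exactly two lie in $G-D$. Say $xy \in A$ and $yz, zx \in G-D$. Then in $G-D$ the component of $x,y,z$ is the path $x - z - y$ (a $P_3$), because any further edge among these vertices would already be present (contradicting that $xy\in A\subseteq\overline G$) and any edge from $\{x,y,z\}$ to a fourth vertex $w$ in $G-D$ would, together with $xy\in A$, create a fourth vertex in the clique containing $x,y,z$, contradiction — so in fact $x - z - y$ is exactly one connected component of $G-D$. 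Moreover $xy$ being in $A$ means $xy$ is addable, i.e., the edge of $\overline G$ between the endpoints $x,y$ of this $P_3$ is addable to $G$. Conversely, for the backward direction, given such a matching $D$ whose components are $P_1$, $P_2$, $P_3$ with the stated addability condition on the $P_3$'s, I would let $A$ be the set of edges of $\overline G$ joining the two endpoints of each $P_3$-component; these edges are addable by hypothesis, and $A$ is a matching because distinct $P_3$-components are vertex-disjoint, so each vertex is an endpoint of at most one such edge. Then $G - D + A$ turns every $P_3$ into a $K_3$ and leaves $P_1,P_2$ untouched, so it is a clique graph, and $D$ is a matching of deletable edges; hence $(G,\add{},\del{})$ is a YES-instance.

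The main obstacle is the argument that no clique of size $\ge 4$ can arise, and more precisely the bookkeeping that shows each $P_3$-component is genuinely an isolated component of $G-D$ (not merely a subgraph) — i.e., ruling out edges from $\{x,y,z\}$ to the rest of $G-D$. Both points rely crucially on $D$ and $A$ being \emph{matchings} together with the $\{C_3,C_4\}$-freeness and degree bound, and I would be careful to spell out the degree counting in $G-D$: after deleting a matching, a degree-$3$ vertex has degree $\ge 2$ in $G-D$, which combined with the component being a short path is exactly what pins the component down. Everything else is routine case checking.
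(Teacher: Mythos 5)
Your proof is correct and follows essentially the same route as the paper: the paper deduces the statement from Lemma~\ref{lem:posI}, whose proof is exactly your structural argument that the matchings $D$ and $A$ together with the degree bound and $C_3$-freeness force every component of $G-D+A$ to be a clique on at most four vertices, with $C_4$-freeness then excluding the $K_4$ case as you do. Your added bookkeeping (ruling out $K_4$/$K_5$ explicitly and checking that each $P_3$ is a full component of $G-D$) just spells out details the paper leaves as "easy to prove directly."
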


\AZ{A matching $M$ in a graph $G$ {\em covers} a vertex $v$ in $G$ if $v$ is an end-point of an edge in $M.$}
In the proof of the next lemma, we will use the well-known result that given a graph $G$ in polynomial time either a perfect matching can be found in $G$ 
or we \AY{can} decide that $G$ has no perfect matching. 

\begin{lemma}\label{thm:EdmondBlossum}
Let $G$ be a graph and let $Y$ and $Z$ be disjoint sets of vertices in $G$.
In polynomial time, we can decide whether $G$ contains a matching covering \AZ{every vertex in $Y$, but no vertex in $Z$}.
\end{lemma}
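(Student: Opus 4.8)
The plan is to reduce the question to a single perfect-matching query, which is answerable in polynomial time by the subroutine quoted just above the statement. First, a matching that covers no vertex of $Z$ cannot use any edge incident to $Z$, so we may simply delete $Z$: put $G' := G-Z$ (note $Y\subseteq V(G')$ since $Y$ and $Z$ are disjoint), and observe that $G$ has a matching covering every vertex of $Y$ and no vertex of $Z$ if and only if $G'$ has a matching saturating $Y$.

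It therefore remains to decide in polynomial time whether a graph $G'$ has a matching saturating a prescribed set $Y$; write $R := V(G')\setminus Y$. The idea is to make the vertices of $R$ collectively \emph{optional} by allowing any leftover $R$-vertices to be absorbed into an auxiliary clique. Construct $G^{+}$ from $G'$ by adding all edges between pairs of vertices of $R$ (so $R$ induces a clique in $G^{+}$), and, if $|V(G')|$ is odd, additionally adding one new vertex $w$ adjacent precisely to every vertex of $R$ and included in that clique. I claim $G'$ has a matching saturating $Y$ if and only if $G^{+}$ has a perfect matching. Running the perfect-matching subroutine on $G^{+}$ (which is built from $G$ in polynomial time) then settles the original question.

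For the forward direction, take a matching $M$ of $G'$ saturating $Y$. The set $R_0$ of vertices uncovered by $M$ lies entirely in $R$, and $|R_0| = |V(G')| - 2|M|$ has the same parity as $|V(G')|$. If $|V(G')|$ is even, $|R_0|$ is even and we extend $M$ by an arbitrary perfect matching of $R_0$ inside the clique on $R$; if $|V(G')|$ is odd, then $|R_0\cup\{w\}|$ is even and we extend $M$ by a perfect matching of $R_0\cup\{w\}$ inside the clique on $R\cup\{w\}$. In either case we get a perfect matching of $G^{+}$. Conversely, if $M^{+}$ is a perfect matching of $G^{+}$, set $M := M^{+}\cap E(G')$. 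Every edge added in the construction lies inside $R\cup\{w\}$, which is disjoint from $Y$, so each $y\in Y$ is covered by an edge of $M^{+}$ lying in $E(G')$, hence in $M$; thus $M$ is a matching of $G'$ saturating $Y$.

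The one point deserving care — the main (minor) obstacle — is the global parity of $|V(G')|$: one cannot make a single vertex optional by a local gadget, since that would require a graph together with a one-vertex attachment to both have perfect matchings, which is impossible by parity. This is exactly why the leftover $R$-vertices must be absorbed collectively into a single clique, with the extra vertex $w$ added only to fix the residual parity. All remaining verifications (that $G^{+}$ is polynomial-size, that the added set is disjoint from $Y$, and the degenerate cases $R=\emptyset$ or $Y=V(G')$) are routine.
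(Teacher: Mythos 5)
Your proposal is correct and follows essentially the same route as the paper: delete $Z$, turn $V(G')\setminus Y$ into a clique (adding one extra vertex to fix parity when $|V(G')|$ is odd), and reduce to a single perfect-matching test. Your write-up is just slightly more explicit about the parity bookkeeping and the verification of both directions.
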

\begin{proof}
Let $G'=G-Z$. Observe that $G$ contains a matching covering \AZ{every vertex of $Y$ and no vertex $Z$} if and only if $G'$ contains a matching covering $Y$. 
Let $U= V(G')-Y$ and if $G' $ has odd number of vertices then add an extra (isolated) vertex to $G'$ which is not in $Y$ (and therefore will be added to $U$). Finally add edges between all pairs of non-adjacent vertices in $U$. Let the resulting graph be $G''.$  Now we will show that $G''$ has a perfect matching if and only if there is a matching in $G '$ covering $Y.$ If $M''$ is a perfect matching in $G''$ then by deleting all edges with both end-points in $U$ we get the desired matching in $G'.$  And if $M'$ is the desired matching in $G'$ then adding as many edges as possible with both end-points in $U$ gives us a perfect matching in $G''.$
\end{proof}

\begin{lemma}\label{lem:polAlg}
 $(1,1)$-{\sc Cluster Editing} can be solved in polynomial time for all graphs in ${\cal G}_6$ in polynomial time.
\end{lemma}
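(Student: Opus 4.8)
\textbf{Proof plan for Lemma \ref{lem:polAlg}.}
The plan is to exploit the characterization provided by Lemma \ref{lem:posII}: for a $\{C_3,C_4\}$-free graph $G$ of maximum degree at most 3, $(G,\add{},\del{})$ is a YES-instance precisely when $G$ admits a matching $D$ of deletable edges so that every component of $G-D$ is a $P_1$, $P_2$, or $P_3$, with the extra addability condition on $P_3$-components. The whole task therefore reduces to deciding whether such a "good" deletion matching exists, and we will use the structural restrictions built into ${\cal G}_6$ to do this by a reduction to matching. First I would observe that since the vertices of degree 3 form an independent set and every vertex of degree at least 2 has $\del{}(v)=1$, the deletable edges are easy to describe locally, and that the only constraint that couples distant parts of the graph is the matching condition on $D$ together with the requirement that no component of $G-D$ be "too large" (a path on $\ge 4$ vertices, a cycle, or a tree with a vertex of degree 3).

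The key steps, in order, are as follows. (1) Reformulate "$G-D$ has all components among $P_1,P_2,P_3$'' as a purely local condition on $D$ relative to the vertices of degree 3 and to the long paths/cycles of $G$: a degree-3 vertex $v$ must have at least two of its three incident edges in $D$ (but $D$ is a matching, so \emph{exactly} this many is impossible unless one neighbour-edge is forced; here the independence of degree-3 vertices is essential, and the condition that a degree-3 vertex is never adjacent to a vertex $w$ with $\add{}(w)=0$ guarantees the surviving $P_3$ through $v$, if any, is repairable). (2) Reduce the remaining graph — which after handling degree-3 vertices is a disjoint union of paths and cycles of maximum degree 2 — to the problem of choosing a matching $D$ on a path or cycle so that no gap between consecutive chosen edges leaves a sub-path on $\ge 4$ vertices and so that each length-2 gap corresponds to an addable complement edge. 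This is a simple interval/dynamic-programming condition on each path and cycle, or alternatively can be encoded directly. (3) Encode the global existence of a compatible matching $D$ using Lemma \ref{thm:EdmondBlossum}: build an auxiliary graph in which the vertices that \emph{must} be covered by $D$ (essentially the degree-3 vertices and the internal vertices of long segments) are placed in the set $Y$, the vertices that \emph{must not} be covered are placed in $Z$, edges correspond to deletable edges (and to gadget edges enforcing the $P_3$-addability condition), and then invoke Lemma \ref{thm:EdmondBlossum} to decide in polynomial time whether a matching covering $Y$ and avoiding $Z$ exists. Finally, (4) verify that a matching returned by the algorithm does yield a valid solution and, conversely, that any valid solution's deletion matching satisfies the $Y/Z$ constraints, so the reduction is an equivalence.

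The main obstacle I expect is step (1)–(2): translating the "all components are $P_1,P_2,P_3$ with addable $P_3$'s'' requirement into constraints that are expressible purely through a "cover exactly the vertices in $Y$, avoid $Z$'' matching query. The subtlety is that whether a given vertex must be covered by $D$ is not always determined a priori — for an internal vertex of a long path there is a choice of which incident edge to delete — so one must be careful to set up the auxiliary instance (possibly one path/cycle at a time, or with small gadgets per segment) so that the matching has exactly the needed flexibility and no more. The special structure of ${\cal G}_6$ (degree-3 vertices independent and not adjacent to $\add{}=0$ vertices, $\del{}\equiv 1$ on vertices of degree $\ge 2$) is precisely what makes these gadgets local and the whole construction polynomial; the remaining work is a finite case analysis on the possible local configurations around a degree-3 vertex and along a maximal degree-$\le 2$ path or cycle, together with the routine check that gadget sizes and the number of invocations of Lemma \ref{thm:EdmondBlossum} stay polynomial.
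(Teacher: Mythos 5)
You have assembled the right ingredients --- the characterization of Lemma \ref{lem:posII} plus a ``cover $Y$, avoid $Z$'' matching query solved via Lemma \ref{thm:EdmondBlossum} --- but the proposal stops short of the step that constitutes the actual proof: exhibiting a concrete set $Y$ and proving that a \emph{single} matching query is equivalent to the existence of a good deletion matching. You explicitly flag this translation as your main unresolved obstacle, and the sketch you offer for it (per-segment dynamic programming, unspecified gadget edges ``enforcing the $P_3$-addability condition'') is not carried out. Moreover, the local analysis in your step (1) is wrong as stated: a degree-3 vertex $v$ does not need ``at least two of its three incident edges in $D$'' (impossible for a matching, as you note, which leaves the condition incoherent); it needs exactly one incident edge in $D$, and what must additionally hold is that the two remaining neighbours of $v$ are each covered by $D$ through their \emph{other} incident edge, so that $v$ becomes the centre of a surviving $P_3$.

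The paper resolves your obstacle with no gadgets and no DP. Take $Y = X_3 \cup X_2' \cup Z$, where $X_3$ is the set of degree-3 vertices, $X_2'$ the set of degree-2 vertices adjacent to a degree-3 vertex, and $Z$ the set of remaining degree-2 vertices having a neighbour $w$ with $\add{}(w)=0$; the forbidden set consists of the vertices with $\del{}$-value $0$. The issue you worried about --- that for many vertices it is not determined a priori whether they are covered --- is handled by maximality: given any matching covering $Y$ and avoiding the forbidden vertices, extend it greedily to a \emph{maximal} matching $D$ of deletable edges (possible because in ${\cal G}_6$ every vertex of degree at least 2 has $\del{}$-value 1). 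Maximality forces that no edge of $G-D$ has both endpoints of degree 2 in $G-D$, so every component is a $P_1$, $P_2$ or $P_3$, and the ${\cal G}_6$ conditions (degree-3 vertices have no neighbour with $\add{}=0$; vertices of $X_2''\setminus Z$ have no such neighbour either) give addability of the missing edge of every surviving $P_3$. Conversely, the deletion matching of any solution must cover $X_3$ (degree reasons), $X_2'$ (otherwise a $P_4$ survives, using $\{C_3,C_4\}$-freeness) and $Z$ (otherwise a $P_3$ with a non-addable end edge survives), and it avoids $\del{}=0$ vertices by definition. Without this (or an equally explicit) construction together with the two-way argument, your write-up remains a plan rather than a proof.
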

\begin{proof}
Let $G \in {\cal G}_6$ \AY{be arbitrary with vertex-weights $(a^*,d^*)$.}
Let $X_i = \{ v \; | \; d_G(v)=i \}$ for all $i=0,1,2,3$.
We may assume that $G$ is connected as otherwise we can just consider each connected component separately (it will never be advantageous to add edges
between different components). We may clearly also assume that $G$ has at least two vertices, which implies that $X_0 = \emptyset$.
Let $X_2'$ denote all vertices in $X_2$ with an edge to a vertex in $X_3$ and let $X_2'' = X_2 \setminus X_2'$.
Let $Z$ contain all vertices, $z$, in $X_2''$, such that $z$ is adjacent to a vertex, $w$, with $\add{}(w)=0$.  In other words, 
$Z$ contains all vertices from $X_2''$ which are neighbours of a vertex which we are not allowed to add edges to.
Define $Y$ as follows:
 $Y = Z \cup X_2' \cup X_3.$

We will show that $(G,\add{},\del{})$ is a YES-instance to  $(1,1)$-{\sc Cluster Editing}, if and only if there exists a matching in $G$ that covers all vertices in $Y$ but no vertex, $w$,
 with $\del{}(w)=0$.
This will give us the desired polynomial algorithm by Lemma~\ref{thm:EdmondBlossum}.

\2

First assume that $D$ is a matching in $G$ that covers all vertices in $Y$ but no vertex, $w$, with $\del{}(w)=0$.
We may assume that $D$ is maximal, as otherwise we just keep adding edges to $D$ (where both end-points have $d^*$-value 1) until it becomes
maximal (which is not necessarily maximum). Let $G' = G-D$. 
By Lemma~\ref{lem:posII} it suffices to show that every component, $C$, in $G'$ is 
isomorphic to $P_1$ or $P_2$ or $P_3$ and if $C\cong P_3$ then the edge of $\overline{G}$ between the end-vertices of $C$ is addable to $G$.
As every vertex in $X_3$ belongs to $Y$ we note that $\Delta(G') \leq 2$. 

For the sake of contradiction assume that $uv \in E(G')$ and $d_{G'}(u)=d_{G'}(v)=2$. 
As all vertices in $X_2'$ will have degree 1 in $G'$ we note that neither $u$ nor $v$ belong to $X_2'$.
As all vertices in $X_3$ are only adjacent to vertices in $X_2'$, by the definition of ${\cal G}_6$ and $X_2'$, we note that 
neither $u$ nor $v$ belong to $X_3$. So  $u$ and $v$ must both belong
to $X_2''$. By the definition of ${\cal G}_6$ we note that $\del{}(u)=\del{}(v)=1$.
 However, this contradicts the fact that $D$ is maximal, as we could have added the edge $uv$ to $D$.
So, no $uv \in E(G')$ has $d_{G'}(u)=d_{G'}(v)=2$. This implies that all components in $G'$ are isomorphic to $P_1$ or $P_2$ or $P_3$.

Assume that $C$ is a component in $G'$ isomorphic to $P_3$ and let $v_1 v_2 v_3$ be the $3$-path in $C$.
By the construction of $G'$ we note that $v_2 \in X_3$ or $v_2 \in X_2''$. 
If $v_2 \in X_3$ then, by the definition of ${\cal G}_6$, we note that $\add{}(v_1)=\add{}(v_3)=1$.
And if $v_2 \in  X_2''$, then $v_2 \not\in Y$ which by the definition of $Z$ implies that $\add{}(v_1)=\add{}(v_3)=1$.
So in both cases $\add{}(v_1)=\add{}(v_3)=1$, which by Lemma~\ref{lem:posII}, implies that $(G,\add{},\del{})$ is a YES-instance to  $(1,1)$-{\sc Cluster Editing},
as desired.

\2
 
Now conversely assume that $(G,\add{},\del{})$ is a YES-instance to $(1,1)$-{\sc Cluster Editing} and that $D$ denotes the edges deleted from $G$ in the solution. 
By definition $D$ is a matching in $G$ and by Lemma~\ref{lem:posII} every component, $C$, in $G-D$ is
isomorphic to $P_1$ or $P_2$ or $P_3$ and if $C\cong P_3$ then the edge of $\overline{G}$ between the end-vertices of $C$ is addable to $G$.
Let $G^* = G-D$.
As $\Delta(G^*) \leq 2$ we note that $D$ covers all vertices in $X_3$.
Let $u \in X_2'$ be arbitrary and let $v$ be a neighbour of $u$ in $G$ such that $v \in X_3$ (which exists by the definition of $X_2'$). 
Let $N(u)=\{x,v\}$ and let $N(v)=\{u,s,t\}$, where as $G$ is $\{C_3,C_4\}$-free implies that $x,v,u,s,t$ are all distinct.
If no edge incident with $u$ belongs to $D$, then either $xuvs$ or $xuvt$ will be a $P_4$ in $G^*$, a contradiction, so $u$ is covered by $D$.
So all vertices in $X_2'$ are covered by $D$ and previously we showed that all vertices in $X_3$ are also covered by $D$.

Let $z \in Z$ be arbitrary. That is $z \in X_2''$ and $z$ is adjacent to a vertex, $w$, with $\add{}(w)=0$.
Let $N(z)=\{w,q\}$ and note that if $z$ is not covered by $D$ then $wzq$ is a $P_2$ in $G^*$, but as $\add{}(w)=0$, the edge $wq \in \overline{G}$
 is not addable to $G$. So $z$ must be covered by $D$. This implies that $D$ is a matching covering $Y$ in $G$.
Furthermore, by definition, $D$ does not cover any vertex with $d$-value zero.

\2

We have therefore shown that 
$(G,\add{},\del{})$ is a YES-instance to  $(1,1)$-{\sc Cluster Editing}, if and only if there exists a matching in $G$ that covers all vertices in $Y$ but no vertex, $w$, with $\del{}(w)=0$.
\end{proof}

The following theorem is the main result of the paper.

\begin{theorem}\label{cor:polAlg}
 $(1,1)$-{\sc Cluster Editing} can be solved in polynomial time.
\end{theorem}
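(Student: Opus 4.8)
The plan is to compose the five reductions of Section~\ref{sec:scheme} and then invoke the polynomial-time algorithm of Lemma~\ref{lem:polAlg}. Given an arbitrary instance $(G,\add{},\del{})$ of $(1,1)$-{\sc Cluster Editing}, we have $G\in{\cal G}_1$, so I would first apply Lemma~\ref{lem:tf} to obtain an equivalent instance on a graph in ${\cal G}_2$; then Lemma~\ref{lem:noC4with4deg3} to pass from ${\cal G}_2$ to ${\cal G}_3$, Lemma~\ref{lem:noC4with3deg3} from ${\cal G}_3$ to ${\cal G}_4$, Lemma~\ref{lem:noC4} from ${\cal G}_4$ to ${\cal G}_5$, and finally Lemma~\ref{lem:dZero} from ${\cal G}_5$ to ${\cal G}_6$. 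At each of these five steps the reduction either already reports the correct answer (YES or NO) in polynomial time, in which case we stop and output it, or it produces in polynomial time an equivalent instance whose underlying graph lies in the next class and whose order is bounded by a polynomial in the order of the current graph. Once an instance on a graph in ${\cal G}_6$ is reached, Lemma~\ref{lem:polAlg} decides it in polynomial time, and we return that answer.

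For correctness, I would note that the relation ``$(1,1)$-{\sc Cluster Editing} can be reduced from ${\cal G}$ to ${\cal H}$'', as defined at the start of Section~\ref{sec:scheme}, preserves the YES/NO status of the instance: $(G,\add{},\del{})$ is a YES-instance if and only if the instance produced by Lemma~\ref{lem:tf} is, if and only if the instance produced by Lemma~\ref{lem:noC4with4deg3} is, and so on down the chain, if and only if the final instance on a graph in ${\cal G}_6$ is a YES-instance; and the latter is decided correctly by Lemma~\ref{lem:polAlg}. Hence the answer returned by the composed procedure is always correct. For the running time, each reduction runs in time polynomial in the order of its input graph and increases the order of the graph only by a polynomial factor, and the final call to Lemma~\ref{lem:polAlg} is polynomial in the order of its input; since a composition of finitely many polynomials is again a polynomial, the order of every intermediate graph, and therefore the total running time of the whole procedure, is bounded by a fixed polynomial in $|V(G)|$.

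In short, the theorem is immediate once Lemmas~\ref{lem:tf}--\ref{lem:dZero} and Lemma~\ref{lem:polAlg} are available. There is essentially no obstacle at this final step: the only thing that genuinely needs a word of justification is that chaining five size-increasing reductions keeps the instance size polynomially bounded, which holds precisely because the number of reductions is a fixed constant. All of the actual difficulty lies in Stage~1 (the five reductions, carried out in Section~\ref{sec:5}) and in Stage~2 (Lemma~\ref{lem:polAlg}), both of which may be assumed here.
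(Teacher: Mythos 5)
Your proposal is correct and matches the paper's own proof, which likewise obtains the theorem immediately by composing Lemmas~\ref{lem:tf}, \ref{lem:noC4with4deg3}, \ref{lem:noC4with3deg3}, \ref{lem:noC4}, \ref{lem:dZero} and then applying Lemma~\ref{lem:polAlg}. Your extra remark that a fixed constant number of polynomial-time, polynomially size-bounded reductions keeps everything polynomial is exactly the (implicit) justification the paper relies on via its definition of reduction in Section~\ref{sec:scheme}.
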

\begin{proof}
This follows immediately  from Lemmas~\ref{lem:tf}, \ref{lem:noC4with4deg3}, \ref{lem:noC4with3deg3}, \ref{lem:noC4}, \ref{lem:dZero}
 and \ref{lem:polAlg}.
\end{proof}

\section{Series of reductions to \AZ{special} $\{C_3,C_4\}$-free graphs of maximum degree at most 3}\label{sec:5}

In this section we give proofs of lemmas stated in Section \ref{sec:scheme} and additional lemmas used in the proofs of lemmas in Section \ref{sec:scheme}. 

Recall that ${\cal G}_1$ denotes the set of all graphs and 
${\cal G}_2$ denotes the set of all $C_3$-free graphs of maximum degree at most 3.

\vspace{2mm}

\noindent {\bf Lemma \ref{lem:tf}.}
{\em  $(1,1)$-{\sc Cluster Editing} can be reduced from ${\cal G}_1$ to ${\cal G}_2$.}
\begin{proof}
Let $(G,\add{},\del{})$ be an instance of  $(1,1)$-{\sc Cluster Editing}. Suppose that $G$ has a triangle $T\cong C_3$. We can delete at most one edge from $T$. Thus, if $(G,\add{},\del{})$ has a solution, it must contain $T$ as (part of) a clique. In the algorithm below to shorten its description, once a deletable edge $uv$ is deleted from $G$, we immediately set $\del{}(u)=\del{}(v)=0$ and if an addable edge is added to $G$, we  immediately set $\add{}(u)=\add{}(v)=0.$ 

Let $Q:=T$ and execute the following loop \AZ{which may add vertices to $Q$}. While there is a vertex outside $Q$ adjacent to a vertex in $Q$, 
for every such vertex $v$ do the following cases in turn.
If there is only one edge $e$ between $v$ and $Q$ then if $e$ is an deletable edge, then $e$ is deleted and the loop is continued.
 If there are exactly $|V(Q)|-1$ edges between $v$ and $V(Q)$ in $G$ and the edge in $\overline{G}$ between $v$ and $V(Q)$ is addable, then   add the addable edge to $G$ and continue the loop.  
If there are $|V(Q)|$ edges between $v$ and $V(Q)$ in $G$, then add vertex $v$ and the edges between $v$ and $Q$  to $Q$.
 Otherwise, $v$ has between two and $|V(Q)|-2$ edges to $Q$ and we conclude that $(G,\add{},\del{})$ is a NO-instance and stop the loop. 
If the loop stops without concluding that $(G,\add{},\del{})$ is a NO-instance, then delete $Q$ from $G.$
 
 We can continue as above and either eliminate all triangles from $G$ or conclude that $(G,\add{},\del{})$ is a NO-instance. In the former case, we obtain a $C_3$-free instance $(G,\add{},\del{})$ which is a YES-instance if and only if the initial instance is. Note that it takes a polynomial time to compute the $C_3$-free instance or conclude that the initial instance is a NO-instance.
 
 Now suppose that $(G,\add{},\del{})$ is a $C_3$-free instance and $G$ has a vertex $v$ of degree at least four. Since we can delete at most one edge incident to $v$ and $G$ is $C_3$-free, we cannot make a clique including $v$ without adding at least two edges incident to a neighbour of $v$. Thus, $(G,\add{},\del{})$ is a NO-instance. 
 It is not hard to verify that the algorithm runs in polynomial time.
 \end{proof}

Let $(G,\add{},\del{})$ be an instance of  $(1,1)$-{\sc Cluster Editing} and let $Q \subseteq V(G)$ be arbitrary.
Consider the graph $G-Q$, where we delete $Q$ (and all edges incident with $Q$).
We define $\del{Q}$ such that $\del{Q}(v) = \del{}(v) - |N(v) \cap Q|$ for all $v \in V(G-Q)$.
We define $\add{Q}$ to be the function $\add{}$ restricted to $V(G-Q)$. 
\AY{Note that by definition of vertex-weights,
if $\del{Q}(v) <0$ for any $v \in V(G-Q)$ then $(G-Q,\add{Q},\del{Q})$ is a NO-instance.}
\AZ{Similarly, for $F\subseteq E(G)$, let}
$\del{F}(v) = \del{}(v) - |E(v) \cap F|$, where $E(v)$ is the set of edges incident to $v$. Also, $\add{F}=\add{}.$ 
Again, if $\del{F}(v) <0$ for any $v \in V(G)$ then $(G-F,\add{F},\del{F})$ is a NO-instance.

We say that a vertex set $X$ of $G$ be {\em R-deletable} if $X$ has constant size and every solution to $(G,a^*,d^*)$ (if any) puts all edges between $X$ and $V(G)-X$ in $D.$
Note that vacuously this implies that if $G$ is a NO-instance then any set  $X$ of constant size is an R-deletable set. 

\vspace{2mm}

\AZ{The following lemma is used in the reductions with $X$ containing at most 12 vertices.}

\begin{lemma}\label{lem:Rdel}
Let $(G,\add{},\del{})$ be an instance of  $(1,1)$-{\sc Cluster Editing}.  
If a vertex set $X$ of $G$ \AY{is R-deletable}, then we can either solve $(G,\add{},\del{})$ is constant time or reduce the instance to $G-X$. 
\end{lemma}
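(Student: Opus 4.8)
The plan is to prove Lemma~\ref{lem:Rdel} by a straightforward case analysis on whether $G$ is a YES-instance or a NO-instance, using the definition of R-deletable to show that the question of solvability passes cleanly to $G-X$. First I would dispose of the trivial direction: if $G$ is a NO-instance, then by the remark following the definition, $X$ is vacuously R-deletable, and we must check that we can detect this. Since $|X|$ is constant, we can enumerate in constant time all possible ways to route the (constantly many) edges incident to $X$ into $D$ or leave them, together with all constantly many choices of addable edges internal to $X$ or between $X$ and its (constantly many) neighbours; combined with the fact that whether $G-X$ (with adjusted weights) is a YES-instance is something the reduction target will decide, this shows we either solve $G$ outright or reduce. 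I would be careful to state that ``constant time'' here is relative to the fixed bound on $|X|$ (at most $12$ in the applications).

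The core of the argument is the YES-instance case. Suppose $(G,\add{},\del{})$ has a solution $G-D+A$. Because $X$ is R-deletable, every edge between $X$ and $V(G)-X$ lies in $D$. Hence in the solution, the cliques are partitioned into those contained in $X$ and those contained in $V(G)-X$: no clique straddles the cut. The restriction of $D$ and $A$ to $V(G)-X$ therefore certifies that $G-X$ (with weights $\del{X}$ and $\add{X}$ as defined just before the lemma) is a YES-instance — here I use that deleting all $X$-to-$(V(G)-X)$ edges reduces each $\del{}(v)$ for $v\notin X$ by exactly $|N(v)\cap X|$, which is precisely how $\del{X}$ was defined, and that $A$ restricted to $V(G)-X$ is still a valid addable matching since $\add{X}$ is just $\add{}$ restricted. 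Conversely, if $G-X$ (with these weights) is a YES-instance via $D', A'$, I would extend this to a solution of $G$: take $D = D' \cup \{\text{all }X\text{-to-}(V(G)-X)\text{ edges}\}$, and on $X$ itself run a constant-time brute-force over all ways of deleting a matching of deletable internal edges and adding a matching of addable internal edges to turn $G[X]$ into a clique graph; if no such completion exists, then (since any solution of $G$ would restrict to one) $G$ is a NO-instance and we report that. One subtlety to handle: the extended $D$ must still be a \emph{matching}, i.e.\ no vertex of $V(G)-X$ can be simultaneously covered by $D'$ and by an edge to $X$; but that is guaranteed because such a vertex $v$ already had $\del{X}(v) = \del{}(v) - |N(v)\cap X| \le 1$, so $v$ is covered in $D'$ only if $|N(v)\cap X| = 0$.

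The step I expect to be the main obstacle — more a matter of bookkeeping than depth — is making the ``solve in constant time'' claim fully rigorous: one must argue that the behaviour of any solution on $X$ depends only on $G[X]$, the weights on $X$, and the (constant-size) set of neighbours of $X$ outside $X$, none of which contributes more than a constant amount of case-splitting, and that the interaction with $G-X$ is entirely mediated through the R-deletability condition (so the two halves genuinely decouple). I would therefore phrase the final reduction as: compute $\del{X}$; if it is negative anywhere, report NO; otherwise brute-force over completions on $X$; if none exists report NO; if one exists, output $(G-X, \add{X}, \del{X})$ as the reduced instance. All of this is polynomial (indeed the $X$-side is constant) in $|V(G)|$, which is what the notion of reduction in Section~\ref{sec:scheme} requires.
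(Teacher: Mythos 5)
Your overall plan coincides with the paper's (handle the trivial cases, decide the $X$-side by constant-time brute force, and reduce to $G-X$ with the adjusted weights), and two pieces are correct as you state them: the forward direction (R-deletability forces every cut edge into $D$, so no clique straddles the cut and the restriction witnesses that $(G-X,\add{X},\del{X})$ is a YES-instance), and the observation that a vertex of $V(G)-X$ with a neighbour in $X$ has $\del{X}$-value $0$ and hence cannot be covered by $D'$. The genuine gap is on the $X$-side of the cut. Your ``completion on $X$'' is a brute force over matchings of edges of $G[X]$ that are deletable with respect to the \emph{original} weights, and your matching/budget subtlety is explicitly restricted to vertices of $V(G)-X$. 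But the cut edges also consume the deletion budget of the vertices of $X$. The paper therefore passes to $G'=G-F$, where $F$ is the set of all cut edges, adjusts $\del{F}(v)=\del{}(v)-|E(v)\cap F|$ at \emph{every} vertex (inside and outside $X$), rejects if any adjusted value is negative, and then decides $G'[X]$ --- not $G[X]$ with the original weights. Without this, the branch ``a completion exists and $G-X$ is a YES-instance, hence $G$ is a YES-instance'' is unsound: the combined set $D' \cup F \cup (\text{internal deletions})$ need not respect $\del{}$ on $X$.

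The failure occurs precisely in the situation the paper's final remark is careful about, namely when $(G,\add{},\del{})$ is a NO-instance, so that R-deletability is vacuous and gives no structural information. Take $G=K_{1,3}$ with centre $x_1$ and leaves $x_2,x_3,y$, with $\del{}\equiv 1$ and $\add{}\equiv 0$, and $X=\{x_1,x_2,x_3\}$. Then $G$ is a NO-instance (at most one edge at $x_1$ can be deleted and nothing may be added), so $X$ is vacuously R-deletable. Your procedure passes the check $\del{X}(y)=0\ge 0$, finds the completion of $G[X]$ that deletes $x_1x_2$ (both endpoints have original $\del{}$-value $1$), and outputs the reduced instance on $G-X=\{y\}$, which is a YES-instance --- so the claimed equivalence fails; indeed the assembled $D$ would contain both $x_1y$ and $x_1x_2$, the conflict at a vertex of $X$ that your subtlety paragraph excludes from consideration. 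With the paper's adjustment one has $\del{F}(x_1)=0$, so $G'[X]$ has no solution and NO is correctly reported. A second instance of the same omission: a vertex of $X$ with more cut edges than its $\del{}$-value (e.g.\ $X=\{x_1\}$ in the path $y_1x_1y_2$ with $\del{}(x_1)=0$ and $\add{}\equiv 0$) is never caught, because you check negativity of $\del{X}$ only on $V(G)-X$, whereas the paper's check on $\del{F}$ covers it. So the missing ingredient is exactly the paper's device of working in $G-F$ with weights adjusted on both endpoints of the cut edges.
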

\begin{proof}
 If $X=V(G)$ then we can solve the problem in constant time, so assume that $X \not= V(G).$ \AY{Let $F$ be all edges between $X$ and $V(D)-X$ and let $G' =G-F.$ 
 If any $\del{F}$ value}  drops below 0 we have a NO-instance by definition of vertex-weights. 
Otherwise, $G$ is a YES-instance if and only if $G'$ is a YES-instance. 
And as $X$ has constant size we can determine if there is a solution to $G'[X]$ in constant time. 
If there is no solution to $G'[X]$ then \AY{$G$ is a NO-instance} and if there is a solution to $G'[X]$, then we can reduce $G$ by deleting $X$, as $G$ is a YES-instance if and only if $G'-X=G-X$ is a YES-instance.

\AY{Note that the above also holds when $(G,\add{},\del{})$ is a NO-instance as then either $G[X]$ is a NO-instance (after adjustment of $d^*$), which can be decided in constant time, 
or  $G-X$ is a NO-instance (after adjustment of $d^*$).}
\end{proof}

 \begin{lemma}\label{lem:posI}
 Let $G$ be a $C_3$-free graph with maximum degree at most 3. 
 Then $(G,\add{},\del{})$ is a YES-instance of $(1,1)$-{\sc Cluster Editing} if and only if there is  a matching $D$ of deletable edges from $G$ such that every connected component $C$ 
of $G-D$ is isomorphic to $P_1$ or $P_2$ or $P_3$ or $C_4$ and if $C\cong P_3$ then the edge of $\overline{G}$ between the end-vertices of $C$ is 
addable to $G$ and if $C\cong C_4$ then the two chords \AY{of $C$, which lie in $\overline{G}$,} are addable to $G$. 
 \end{lemma}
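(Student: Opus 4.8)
The plan is to prove both directions by exploiting the structure of cluster graphs reachable from a $C_3$-free graph $G$ of maximum degree at most $3$ under the $(1,1)$-budget. First I would observe the following basic fact: in any solution $G-D+A$, the matchings $D$ and $A$ are vertex-disjoint as matchings in the sense that every vertex $v$ loses at most one incident edge (to $D$) and gains at most one incident edge (to $A$); in particular, for each $v$, $d_{G-D+A}(v)\le d_G(v)+1\le 4$, but the cliques of $G-D+A$ have at most $4$ vertices only under stringent conditions. The key local claim is that every clique $K$ in a solution has at most four vertices: indeed $G$ is $C_3$-free, so inside a $K_r$ with $r\ge 3$ we must add edges; a vertex $v$ of $K_r$ with $r\ge 4$ needs at least $r-1-1\ge 2$ added edges incident to it once we account for the at most one original edge it can keep toward $K$ and the at most one edge it can gain, forcing $\add{}$-violations unless $r\le 3$; a careful case check (using that $G$ is triangle-free, so $G[K]$ has at most a matching's worth of edges) shows $r\le 4$, and that $r=4$ forces $G[K]$ to be a $4$-cycle $C_4$ with the two chords added, while $r=3$ forces $G[K]$ to be a $P_3$ (a path on $3$ vertices) with the missing edge added, and $r\le 2$ means $G[K]$ is $P_1$ or $P_2$ with nothing added.

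For the forward direction, suppose $(G,\add{},\del{})$ is a YES-instance with solution $G-D+A$. By the local claim, every component (clique) $C$ of $G-D+A$ has vertex set inducing in $G$ one of $P_1,P_2,P_3,C_4$ (after removing the edges of $D$ that leave $C$), and moreover: if the clique has three vertices then $G[C]-D\cong P_3$ and the non-edge of $\overline G$ joining its endpoints is exactly the edge of $A$ used inside $C$, hence addable; if the clique has four vertices then $G[C]-D\cong C_4$ and the two chords of this $4$-cycle, lying in $\overline G$, are exactly the two edges of $A$ used inside $C$, hence addable. Here one must check that deleting from $G$ only the matching $D$ (and adding nothing) already splits $G$ into the components $P_1,P_2,P_3,C_4$: this is because within each clique $C$ we add edges but never delete, so $G[C]$ minus the $D$-edges leaving $C$ is connected and spans $C$, and being a connected subgraph of a triangle-free graph on at most $4$ vertices with the right number of added edges, it is forced to be one of the listed shapes. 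Thus $D$ is the required matching.

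For the converse, suppose such a matching $D$ of deletable edges exists with every component of $G-D$ isomorphic to $P_1,P_2,P_3$ or $C_4$ under the stated addability conditions. Define $A$ to contain, for each $P_3$-component, the $\overline G$-edge between its two endpoints, and for each $C_4$-component, its two chords in $\overline G$. Then $G-D+A$ turns each such component into $K_1,K_2,K_3,K_4$ respectively, so it is a clique graph. It remains to check that $A$ is a matching of addable edges: addability is given by hypothesis; that $A$ is a matching follows because the added edges inside a $P_3$-component or a $C_4$-component are internally disjoint (in a $C_4=v_1v_2v_3v_4$, the chords $v_1v_3$ and $v_2v_4$ share no vertex), and edges from different components are on disjoint vertex sets. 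Likewise $D$ is already a matching by hypothesis, and all its edges are deletable. Hence $G-D+A$ is a solution and $(G,\add{},\del{})$ is a YES-instance.

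The main obstacle is the local claim bounding clique sizes and pinning down the shape of $G[C]$: one must rule out $K_r$ for $r\ge 5$ and, for $r\in\{3,4\}$, show that the triangle-free graph $G[C]$ (minus the $D$-edges leaving $C$) together with the at most one deletion and at most one addition per vertex is forced into exactly $P_3$ or $C_4$. The maximum-degree-$3$ and $C_3$-free hypotheses are what make this finite case analysis go through — for instance a vertex in a putative $K_4$ can have at most $2$ original incident edges inside $C$ (else a triangle), keeps at most one more by deleting the third, and can add at most one, which is exactly enough for $K_4$ but leaves no slack, forcing every vertex of the $4$-clique to behave identically and the induced graph to be $C_4$ — so I would write that analysis out carefully and treat the rest as routine.
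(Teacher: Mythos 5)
Your overall route is the same as the paper's: bound the order of each clique component of a solution $G-D+A$, use triangle-freeness to pin the induced shape down to $P_3$ (size 3) or $C_4$ plus its two chords (size 4), and then note the converse is just the obvious construction of $A$ from the $P_3$- and $C_4$-components. The paper's proof is exactly this observation, stated tersely, and your converse direction and your reduction of the forward direction to the "local claim" are fine.

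However, the justifications you sketch for the local claim are wrong as stated and must not survive into a final write-up. First, the assertion that a vertex of a clique component "can keep at most one original edge toward $K$" is false (in a $C_4$-component each vertex keeps two), and taken literally your count would force $r\le 3$, contradicting the $K_4$ case the lemma explicitly allows. Second, the parenthetical "so $G[K]$ has at most a matching's worth of edges" is false: a triangle-free graph on $4$ vertices can have $4$ edges ($C_4$); Mantel's bound is $\lfloor r^2/4\rfloor$, not $\lfloor r/2\rfloor$. Third, "at most $2$ original incident edges inside $C$ (else a triangle)" has the wrong reason: three neighbours of $v$ inside the component form an independent set of $G$, so no triangle in $G$ arises; what kills this configuration is that the three missing pairs among those neighbours cannot all be supplied by the matching $A$. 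The correct, and very short, argument is a count: inside a component $K$ on $r$ vertices the clique edges split into $E(G-D)[K]$ and $A[K]$ (an edge of $D$ inside $K$ cannot be restored since $A\subseteq E(\overline G)$), and $A[K]$ is a matching, so $(G-D)[K]$ has at least $\binom{r}{2}-\lfloor r/2\rfloor$ edges while, being triangle-free, it has at most $\lfloor r^2/4\rfloor$; these are incompatible for every $r\ge 5$, and for $r=4$ and $r=3$ they force $(G-D)[K]\cong C_4$ with $A[K]$ equal to the two chords, respectively $(G-D)[K]\cong P_3$ with $A[K]$ the edge joining its end-vertices. With that replacement your plan is a complete proof, matching the paper's.
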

 \begin{proof}
Let $(G,\add{},\del{})$ be a YES-instance of $(1,1)$-{\sc Cluster Editing} and $G-D+A$ is a \AZ{corresponding} solution.
 Since $G$ is a $C_3$-free graph with maximum degree at most 3, each connected component of $G-D+A$ is isomorphic to $K_1$ or $K_2$ or $K_3$ or $K_4$.
 Now the lemma follows from the fact that $G$ is a $C_3$-free graph.
 \end{proof}

\AZ{
\begin{lemma}\label{lem:K23}
Let $G$ be a graph of maximum degree at most 3  containing a $K_{2,3}$ as a subgraph.
Then $(G,\add{},\del{})$ is a NO-instance of $(1,1)$-{\sc Cluster Editing} for every $(a^*,d^*)$.
\end{lemma}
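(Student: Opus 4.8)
\emph{Proof plan.} The idea is that $\Delta(G)\le 3$ makes it impossible to turn a $K_{2,3}$ into a vertex-disjoint union of cliques while deleting only a matching of edges and adding only a matching of edges. Write the parts of the $K_{2,3}$ as $\{x,y\}$ and $\{a,b,c\}$, so that the six edges $xa,xb,xc,ya,yb,yc$ all lie in $G$. Suppose, for contradiction, that $(G,\add{},\del{})$ has a solution $G-D+A$; in the matching formulation $D$ is a matching in $G$, $A$ a matching in $\overline{G}$, and $G-D+A$ is a clique graph. Let $K_x$ and $K_y$ be the clique components of $G-D+A$ containing $x$ and $y$. I would split into the cases $K_x\ne K_y$ and $K_x=K_y$.

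In the case $K_x\ne K_y$, each of $a,b,c$ belongs to a single clique component, hence is missing from $K_x$ or from $K_y$. If $v\in\{a,b,c\}$ is missing from $K_x$, then $x$ and $v$ are non-adjacent in $G-D+A$ even though $xv\in E(G)$; as $xv\notin E(\overline{G})$ it cannot be added, so $xv\in D$, and symmetrically $v\notin K_y$ forces $yv\in D$. This places at least three of the six edges $xa,xb,xc,ya,yb,yc$ into $D$; but those six edges form the edge set of a $K_{2,3}$, which has no matching of size three — contradicting that $D$ is a matching.

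In the case $K_x=K_y=:K$, no $v\in\{a,b,c\}$ can lie outside $K$: otherwise, arguing as above, both $xv$ and $yv$ would be in $D$, i.e.\ two edges of the matching $D$ at $v$. Hence $a,b,c\in K$, and because $K$ is a clique of $G-D+A$, all of $ab$, $ac$, $bc$ are edges of $G-D+A$. At most one of these three can already be an edge of $G$: if two were, say $ab,ac\in E(G)$, then $a$ would be adjacent in $G$ to $x,y,b,c$, violating $\Delta(G)\le 3$ (the other cases are symmetric). Consequently $A$ must contain at least two of $ab,ac,bc$; but any two of these three edges share an endpoint, contradicting that $A$ is a matching. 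Either case yields a contradiction, so $(G,\add{},\del{})$ is a NO-instance for every $(a^*,d^*)$.

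The bulk of the work, and the only place needing genuine care, is the second case — first forcing $a,b,c$ into the common clique of $x$ and $y$, and then observing that completing the resulting $K_5$ would require two additions incident to a single vertex. The remaining steps follow directly from $\Delta(G)\le 3$ together with the fact that $D$ and $A$ are matchings.
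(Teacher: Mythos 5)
Your proof is correct and follows essentially the same two-pronged argument as the paper: the vertices of the $K_{2,3}$ can neither be split across different cliques of a solution (only a matching may be deleted) nor all placed in one clique (that would force two added edges sharing an endpoint, equivalently two additions at one vertex, using $\Delta(G)\le 3$ to bound how many of $ab,ac,bc$ can already be present). The only cosmetic difference is that you establish the "cannot be split" part by an explicit case analysis on the clique components of $x$ and $y$ together with the fact that $K_{2,3}$ has no matching of size three, where the paper simply observes that the induced subgraph on these five vertices minus any matching stays connected.
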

\begin{proof}
Let $G$ contain an induced subgraph $H$ with five vertices containing $K_{2,3}$. Since the maximum degree of $G$ is at most 3, $H$ is either isomorphic to $K_{2,3}$ or $K_{2,3}+e$, where
$e$ is an edge between two vertices of the partite set of $K_{2,3}$ of size 3. Note that $H$ has a vertex $v$ of degree 2.  
The vertices of $H$ cannot be in two or more cliques of a solution as we can only delete a matching $M$ from $H$ and $H-M$ is connected. 
All vertices of $H$ cannot be in a clique of a solution as this would require adding at least two edges incident to $v$, which is not allowed.
\end{proof}
}

 

Recall that ${\cal G}_3$ denotes all $C_3$-free graphs of maximum degree at most 3 which contain no $4$-cycle, all of whose vertices have degree 3.

\vspace{2mm}
 
\noindent{\bf Lemma \ref{lem:noC4with4deg3}.}
{\em  $(1,1)$-{\sc Cluster Editing} can be reduced from ${\cal G}_2$ to ${\cal G}_3$ in polynomial time.}
\begin{proof}
Let $G \in {\cal G}_2$ \AY{be arbitrary with vertex-weights $(a^*,d^*)$.} If there is no $4$-cycle in $G$ where all vertices have degree three then we are done, so
let $C = v_1 v_2 v_3 v_4 v_1$ be a $4$-cycle in $G$ with $d(v_i)=3$ for all $i=1,2,3,4$.
Let $w_i$ be the neighbour of $v_i$ in $G$ which does not lie on $C$. 
Let $V_8=\{v_1,v_2,v_3,v_4,w_1,w_2,w_3,w_4\}$.

\AY{We will show that  $|V_8|=8$ (or we have a NO-instance), so assume for the sake of contradiction that this is not the case. As $G$ is $C_3$-free, we must have $w_1=w_3$ or $w_2=w_4$. 
In either case $K_{2,3}$ is a subgraph of $G$ and we have a NO-instance by Lemma~\ref{lem:K23}. Therefore we may assume that $|V_8|=8$.}

We now prove the following claims.

\2

{\bf Claim A:} {\em If $(G,\add{},\del{})$ is a YES-instance and  $G-D+A$ is a solution, then at least one of the following options holds.

\begin{description}
\item[(i):] $v_i w_i \in D$ for all $i=1,2,3,4$ and no edge on $C$ belongs to $D$.
\item[(ii):] $v_i w_i \not\in D$ for all $i=1,2,3,4$ and $v_1 v_2, v_3 v_4 \in D$ and $v_2 v_3, v_4 v_1 \not\in D$.
\item[(iii):] $v_i w_i \not\in D$ for all $i=1,2,3,4$ and $v_2 v_3, v_4 v_1 \in D$ and $v_1 v_2, v_3 v_4 \not\in D$.
\end{description}

Furthermore, for every such $4$-cycle in $G$ either zero or two edges of the cycle belong to $D$.}

\2

{\bf Proof of Claim A:} Let $C'$ be any $4$-cycle in $G$. If one edge $e \in E(C')$ belongs to $D$, then at least two edges of $C'$ must belong to $D$, as
otherwise $C' - e$ is a $P_4$ in $G-D$, which is not part of a $4$-cycle in $G-D$, a contradiction by Lemma \ref{lem:posI}. As each vertex is adjacent to at most one edge from $D$,
we note that we cannot have more than two edges from $C'$ in $D$.

Option (i) now corresponds to the case when no edge from $C$ belongs to $D$ and Options (ii) and (iii) correspond to the two possible ways that we can add two
(non-adjacent) edges of $C$ to $D$. This completes the proof of Claim~A.\smallqed{}

\2

{\bf Claim B:} {\em We may assume that $\del{}(v_i)=\add{}(v_i)=1$ for all $i=1,2,3,4,$ as otherwise we can reduce our instance or solve it in polynomial time.} 

\2 

{\bf Proof of Claim B:} If $\del{}(v_i)=0$ for some $i\in [4]$, then none of the options in Claim~A is possible, so we have a NO-instance. 
So we may assume that  $\del{}(v_i)=1$ for all $i\in [4]$.

For the sake of contradiction, assume that $\add{}(v_1)=0$ and $(G,\add{},\del{})$ is a YES-instance with a solution $G-D+A$. In this case Option~(i) of Claim A cannot hold as we cannot add the edge $v_1 v_3$ to $C$. 
Option~(ii) of Claim A cannot hold as we cannot add the edge \AZ{$v_1 w_4$} to $G-D$.
Also, Option~(iii) of Claim A cannot hold as we cannot add the edge \AZ{$v_1 w_2$} to $G-D$. So in this case we have a NO-instance. Analogously, we may assume
that $\add{}(v_i)=1$ for all $i\in [4]$, which completes the proof of Claim~B.\smallqed{}

\2

Let $e_1 = w_1 w_2$, $e_2 = w_2 w_3$, $e_3 = w_3 w_4$ and $e_4 = w_4 w_1$ be edges that may or may not exist in $G$. We now prove the following claims.

\2

{\bf Claim C:} \AZ{ {\em We have the following:}

 $\bullet$ {\em We may assume that $e_1 \not\in E(G)$,  as otherwise 
we can reduce or solve our instance in polynomial time. }

 $\bullet$ {\em Option~(iii) in Claim~A cannot occur in any solution of $(G,a^*,d^*)$.}

 $\bullet$ {\em If we know that {there is no solution of $(G,a^*,d^*)$ for which Option (ii) of Claim A occurs,} 
then we can reduce or solve our instance in polynomial time. }

 $\bullet$ {\em Also,  if we know that {there is no solution of $(G,a^*,d^*)$ for which Option (i) of Claim A occurs}, 
then we can reduce or solve our instance in polynomial time. 
}
}

\2

{\bf Proof of Claim C:}  We may assume that $G$ is connected; otherwise consider the connected component containing $V_8.$ 
By renaming vertices we may assume that $e_1 \not\in E(G)$, unless $e_i \in E(G)$ for all $i \in [4]$. 
So assume that $e_i \in E(G)$ for all $i \in [4]$, which, as $G$ is connected and of maximum degree three, implies that $V(G)=V_8$.
In this case we can determine if we have a YES-instance (which is the case if and only if $\del{}(v)=\add{}(v)=1$ for all $v \in V(G)$) or NO-instance in constant time, 
and so we would be done. So we may indeed assume that $e_1 \not\in E(G)$. 

If Option~(iii) in Claim~A occurs, then $w_1 v_1 v_2 w_2$ is a $P_4$ in $G-D$ where $w_1w_2 \not\in E(G)$, which is not possible by \GG{Lemma \ref{lem:posI}}, so Option~(iii) in Claim~A cannot occur.

\AZ{We may assume that, in the rest of the proof, $(G,a^*,d^*)$ is a YES-instance as otherwise the remaining parts of Claim C vacuously hold.}
\AY{As we may assume that Option~(iii) in Claim~A cannot occur, either Option (i) or Option (ii) of Claim A must occur for every solution of $(G,\add{},\del{})$.
First assume that we can show that Option~(ii) does not occur, and therefore Option~(i) must occur in all solutions. In this case, $w_iv_i \in D$ for all $i \in [4]$ and
therefore $V(C)$ is a R-deletable set, so we can reduce $G$ to $G-V(C)$ by Lemma~\ref{lem:Rdel}.

Now assume that Option~(i) cannot occur which implies that Option~(ii) must occur in all solutions. 
This implies that all edges between $V_8$ and $V(G) \setminus V_8$ must belong to $D$. Thus, either $V(G)=V_8$ in which case we are done or
$V_8$ is an R-deletable set, so we can reduce $G$ to $G-V_8$ by Lemma~\ref{lem:Rdel}.
This completes the proof of Claim~C.\smallqed{} }

\2

\begin{figure}[hbtp]
\begin{center}
\tikzstyle{vertexB}=[circle,draw, minimum size=14pt, scale=0.6, inner sep=0.5pt]
\tikzstyle{vertexR}=[circle,draw, color=red!100, minimum size=14pt, scale=0.6, inner sep=0.5pt]

\begin{tikzpicture}[scale=0.3]
 \node (v1) at (3,7) [vertexB] {$v_1$};
 \node (v2) at (9,7) [vertexB] {$v_2$};
 \node (v3) at (9,3) [vertexB] {$v_3$};
 \node (v4) at (3,3) [vertexB] {$v_4$};

 \node (w1) at (0,9) [vertexB] {$w_1$};
 \node (w2) at (12,9) [vertexB] {$w_2$};
 \node (w3) at (12,1) [vertexB] {$w_3$};
 \node (w4) at (0,1) [vertexB] {$w_4$};

\draw (6,9.9) node {{\tiny No edge}};
\draw (6,0.1) node {{\tiny No edge}};

\draw (-1,5) node {{\tiny $e_4$}};
\draw (13,5) node {{\tiny $e_2$}};


\draw [line width=0.03cm] (v1) to (v2);
\draw [line width=0.03cm] (v2) to (v3);
\draw [line width=0.03cm] (v3) to (v4);
\draw [line width=0.03cm] (v4) to (v1);

\draw [line width=0.03cm] (w1) to (v1);
\draw [line width=0.03cm] (w2) to (v2);
\draw [line width=0.03cm] (w3) to (v3);
\draw [line width=0.03cm] (w4) to (v4);

\draw [line width=0.03cm] (w1) to (w4);
\draw [line width=0.03cm] (w2) to (w3);

\draw [dotted, line width=0.03cm] (w1) to (w2);
\draw [dotted, line width=0.03cm] (w3) to (w4);
\draw (6,-2) node {(a)};
\end{tikzpicture} \hspace{2cm}
\begin{tikzpicture}[scale=0.3]
 \node (v1) at (3,7) [vertexB] {$v_1$};
 \node (v2) at (9,7) [vertexB] {$v_2$};
 \node (v3) at (9,3) [vertexB] {$v_3$};
 \node (v4) at (3,3) [vertexB] {$v_4$};
 \node (w1) at (0,9) [vertexB] {$w_1$};
 \node (w2) at (12,9) [vertexB] {$w_2$};
 \node (w3) at (12,1) [vertexB] {$w_3$};
 \node (w4) at (0,1) [vertexB] {$w_4$};
 \node (z1) at (-6,9) [vertexB] {$z_1$};
 \node (z4) at (-6,1) [vertexB] {$z_4$};

\draw (6,9.9) node {{\tiny No edge}};
\draw (6,0.1) node {{\tiny No edge}};
\draw [line width=0.03cm] (v1) to (v2);
\draw [line width=0.03cm] (v2) to (v3);
\draw [line width=0.03cm] (v3) to (v4);
\draw [line width=0.03cm] (v4) to (v1);

\draw [line width=0.03cm] (w1) to (v1);
\draw [line width=0.03cm] (w2) to (v2);
\draw [line width=0.03cm] (w3) to (v3);
\draw [line width=0.03cm] (w4) to (v4);

\draw [line width=0.03cm] (w1) to (w4);
\draw [line width=0.03cm] (w2) to (w3);

\draw [dotted, line width=0.03cm] (w1) to (w2);
\draw [dotted, line width=0.03cm] (w3) to (w4);
\draw [dotted, line width=0.03cm] (w3) to (w4);

\draw [line width=0.03cm] (z1) to (w1);
\draw [line width=0.03cm] (z4) to (w4);
\draw (4,-2) node {(b)};
\end{tikzpicture}

\caption{Graph (a) is an illustration of Claim~D in the proof of Lemma~\ref{lem:noC4with4deg3}, while Graph~(b) is an illustration of Claim~F.}\label{fig:pic1}
\end{center}
\end{figure}
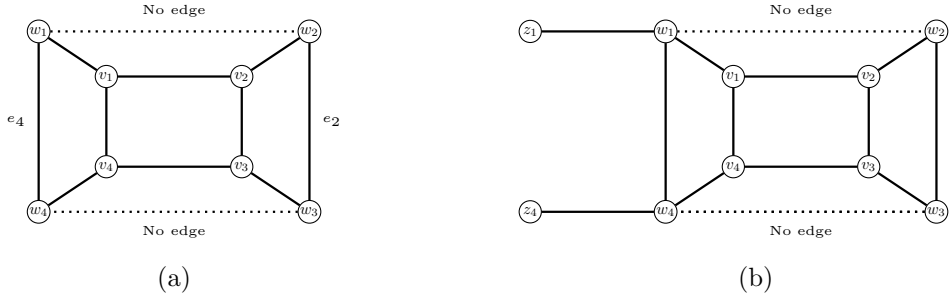

{\bf Claim D:} {\em We may assume that $e_2,e_4 \in E(G)$ and $e_1,e_3 \not\in E(G),$  as otherwise we can reduce our instance or solve it in polynomial time. 
(See Figure~\ref{fig:pic1}(a) for an illustration.)
We may also assume that $\del{}(w_i)=\add{}(w_i)=1$ for all $i \in \{1,2,3,4\}$.}

\2

{\bf Proof of Claim D:} By Claim~C we may assume that $e_1 \not\in E(G)$. For the sake of contradiction assume that $e_2 \not\in E(G)$.
By Claim~C we note that part~(iii) in Claim~A cannot occur.
Analogously, if Option~(ii) in Claim~A occurs, then $w_2 v_2 v_3 w_3$ is a $P_4$ in $G-D$ where $e_2=w_2w_3 \not\in E(G)$, which is not possible.
So, Option~(ii) in Claim~A can also not occur \AY{and we are done by Claim~C}.
Hence, we may assume that $e_2 \in E(G)$. The fact that $e_4 \in E(G)$ can be proved analogously, so we may assume that $e_2,e_4 \in E(G)$.

For the sake of contradiction assume that $e_3 \in E(G)$. In this case Option~(i) in Claim~A cannot occur, as $w_2 w_3 w_4 w_1$ would then be a $P_4$ in 
$G-D$ where $w_1 w_2 \not\in E(G-D)$, a contradiction. \AY{So, again we are done by Claim~C.}



So we may assume that $e_3 \not\in E(G)$, as otherwise we either reduce our instance or solve it in constant time.
This completes the proof of the first part of Claim~D. In order to prove the second part we consider the following two cases.

First assume that $\del{}(w_i)=0$ for some $i \in [4]$. Then Option~(i) in Claim~A cannot occur, so by Claim~C, Option~(ii) must occur.
\AZ{Thus, we can reduce $G$ by Claim C.}
Hence, we may assume that $\del{}(w_i)=1$ for all $i \in [4]$.

Now assume that $\add{}(w_i)=0$ for some $i \in [4]$. In this case Option~(ii) in Claim~A cannot occur, so by Claim~C, Option~(i) must occur.
\AZ{Thus, we can reduce $G$ by Claim C.}
So we may assume that $\add{}(w_i)=1$ for all $i \in [4]$.
This completes the proof of Claim~D.\smallqed{}

\2

{\bf Definition:} If $d(w_1)=3$ then let $z_1$ be defined such that $N(w_1)=\{z_1,v_1,w_4\}$. 

 If $d(w_2)=3$ then let $z_2$ be defined such that $N(w_2)=\{z_2,v_2,w_3\}$.

 If $d(w_3)=3$ then let $z_3$ be defined such that $N(w_3)=\{z_3,v_3,w_2\}$.

 If $d(w_4)=3$ then let $z_4$ be defined such that $N(w_4)=\{z_4,v_4,w_1\}$.

Note that the $z_i$'s may not be distinct and we may have $z_i=w_j$ for some $i$ and $j$. But, by \AY{Claim~D we have $z_1 \not= w_2$, $z_2 \not= w_1$,
$z_3 \not= w_4$ and $z_4 \not= w_3$.}

\2

{\bf Claim E:} {\em We may assume that if $z_i$ exists then $z_i \not\in V_8$ for all $i=1,2,3,4$ and if $z_i$ and $z_j$ both exist with $i \not= j$, then $z_i \not= z_j$. Otherwise, we can reduce our instance or solve it in polynomial time.}

\2

{\bf Proof of Claim E:} For the sake of contradiction assume that $z_1$ exists and $z_1 \in V_8$. By definition there exists a $j \in [4]$ such that $z_i = w_j$.
By Claim~D, we must have $j=3$ (and therefore $z_1=w_3$ and $z_3=w_1$). First consider the case when $z_2$ exists and $z_2=w_4$. In this case, as $G$ is connected, 
we have $V(G)=V_8$ and we can determine if $(G, a^*,d^*)$ is a YES-instance or a NO-instance in constant time. So we may assume that $w_2w_4 \not\in E(G)$. 
Option~(i) in Claim~A cannot occur, as $w_4 w_1 w_3 w_2$ would then be a $P_4$ in
$G-D$ where $w_2 w_4 \not\in E(G-D)$, a contradiction. So, by Claim~C, \AY{we are done in this case.} 
Analogously we can show that \AY{$z_i \not\in V_8$ for all $i=1,2,3,4$ (or we can reduce our instance or solve it in polynomial time).

Now for the sake of contradiction assume that $z_i$ and $z_j$ both exist and $i \not= j$ but $z_i=z_j$. 
Option~(ii) in Claim~A cannot occur, as in this case
$w_i z_i \in D$ and $w_j z_j \in D$, but $z_i$ can be incident with at most one edge from $D$. So, we are done by Claim~C.}
This completes the proof of Claim~E.\smallqed{}

\begin{figure}[hbtp]
\begin{center}
\tikzstyle{vertexB}=[circle,draw, minimum size=14pt, scale=0.6, inner sep=0.5pt]
\tikzstyle{vertexR}=[circle,draw, color=red!100, minimum size=14pt, scale=0.6, inner sep=0.5pt]

\begin{tikzpicture}[scale=0.3]
 \node (v1) at (3,7) [vertexB] {$v_1$};
 \node (v2) at (9,7) [vertexB] {$v_2$};
 \node (v3) at (9,3) [vertexB] {$v_3$};
 \node (v4) at (3,3) [vertexB] {$v_4$};
 \node (w1) at (0,9) [vertexB] {$w_1$};
 \node (w2) at (12,9) [vertexB] {$w_2$};
 \node (w3) at (12,1) [vertexB] {$w_3$};
 \node (w4) at (0,1) [vertexB] {$w_4$};
 \node (z1) at (-4,10) [vertexB] {$z_1$};
 \node (z3) at (16,0) [vertexB] {$z_3$};
 \node (z4) at (-4,0) [vertexB] {$z_4$};
\draw (6,9.5) node {{\tiny No edge}};
\draw (6,1.5) node {{\tiny No edge}};
\draw (6,0.5) node {{\tiny No edge}};
\draw [line width=0.03cm] (v1) to (v2);
\draw [line width=0.03cm] (v2) to (v3);
\draw [line width=0.03cm] (v3) to (v4);
\draw [line width=0.03cm] (v4) to (v1);
\draw [line width=0.03cm] (w1) to (v1);
\draw [line width=0.03cm] (w2) to (v2);
\draw [line width=0.03cm] (w3) to (v3);
\draw [line width=0.03cm] (w4) to (v4);
\draw [line width=0.03cm] (w1) to (w4);
\draw [line width=0.03cm] (w2) to (w3);
\draw [dotted, line width=0.03cm] (w1) to (w2);
\draw [dotted, line width=0.03cm] (w3) to (w4);
\draw [dotted, line width=0.03cm] (z3) to (z4);

\draw [line width=0.03cm] (z1) to (z4);

\draw [line width=0.03cm] (z1) to (w1);
\draw [line width=0.03cm] (z3) to (w3);
\draw [line width=0.03cm] (z4) to (w4);
\draw (6,-2) node {(a)};
\draw (19,5) node {{\Large $\Rightarrow$}};
\end{tikzpicture}
\begin{tikzpicture}[scale=0.3]
\draw (-7,5) node {\mbox{ }};

 \node (w1) at (0,9) [vertexB] {$w_1$};
 \node (w2) at (12,9) [vertexB] {$w_2$};
 \node (w3) at (12,1) [vertexB] {$w_3$};
 \node (w4) at (0,1) [vertexB] {$w_4$};

 \node (z1) at (-4,10) [vertexB] {$z_1$};
 \node (z3) at (16,0) [vertexB] {$z_3$};
 \node (z4) at (-4,0) [vertexB] {$z_4$};

\draw (6,0.5) node {{\tiny No edge}};

\draw [line width=0.03cm] (w1) to (w2);
\draw [line width=0.03cm] (w2) to (w3);
\draw [line width=0.03cm] (w3) to (w4);
\draw [line width=0.03cm] (w4) to (w1);

\draw [dotted, line width=0.03cm] (z3) to (z4);
\draw [line width=0.03cm] (z1) to (z4);

\draw [line width=0.03cm] (z1) to (w1);
\draw [line width=0.03cm] (z3) to (w3);
\draw [line width=0.03cm] (z4) to (w4);
\draw (6,-2) node {(b)};
\end{tikzpicture}

\caption{The reduction in Claim~G of Lemma~\ref{lem:noC4with4deg3}.}\label{fig:pic2}
\end{center}
\end{figure}
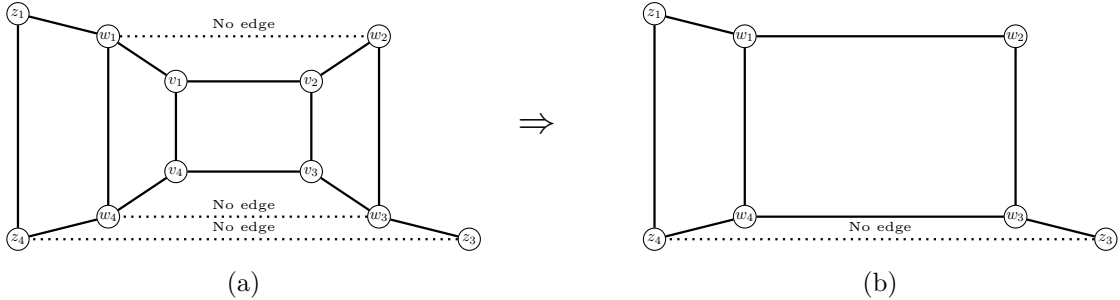

\2

{\bf Claim F:} {\em If $z_1$ and $z_4$ both exist then $z_1 z_4 \in E(G)$. Otherwise, we can reduce our instance or solve it in polynomial time.
Analogously, we may assume that if $z_2$ and $z_3$ both exist then $z_2 z_3 \in E(G)$.}

\2

{\bf Proof of Claim F:} For the sake of contradiction assume that $z_1$ and $z_4$ both exist, but $z_1 z_4 \not\in E(G)$ (see Figure~\ref{fig:pic1}(b)). 
By Claim~C, either Option~(i) or Option~(ii) in Claim~A must occur. 
Assume that Option~(i) in Claim~A occurs. This implies that $v_i w_i \in D$ for all $i \in [4]$.
Hence, $z_1 w_1 w_4 z_4$ is a $P_4$ in $G-D$ where $z_1 z_4 \not\in E(G)$, a contradiction.
So Option~(ii) in Claim~A must occur.
We are now done by \AY{Claim~C}. 
This completes the proof of Claim~F.\smallqed{}


\2

{\bf Claim G:} {\em If $z_1$, $z_3$ and $z_4$ all exist, then  we may assume that $z_3 z_4 z_1$ is a path in $G$. Otherwise, we can reduce our instance or solve it in polynomial time.
Analogously if  $z_1$, $z_2$ and $z_4$ all exist, then we may assume that $z_4 z_1 z_2$ is a path in $G$. 
And if  $z_2$, $z_3$ and $z_4$ all exist, then that $z_2 z_3 z_4$ is a path in $G$.  
And finally if $z_1$, $z_2$ and $z_3$ all exist, then that $z_1 z_2 z_3$ is a path in $G$. }

\2

{\bf Proof of Claim G:} Assume that $z_1$, $z_3$ and $z_4$ exist.
By Claim~F we may assume that $z_1 z_4 \in E(G)$.
For the sake of contradiction assume that $z_3 z_4 \not\in E(G)$.
Let $G^+$ be obtained from $G$ by deleting the vertices $v_1,v_2,v_3,v_4$ and adding the edges $w_1 w_2$ and $w_3 w_4$ (the vertex weights $\del{}$ and $\add{}$ remain unchanged).
See Figure~\ref{fig:pic2} for an illustration. 
We will now show that $G^+$ is a YES-instance if and only if $G$ is a YES-instance.

First assume that $(G,a^*,d^*)$ is a YES-instance\AZ{, i.e., there are $A$ and $D$ such that $G-D+A$ is a solution. We will prove that there are $A^+$ and $D^+$ such that $G^+-D^++A^+$ is a solution.}  
By Claim~C we note that Option~(i) or Option~(ii) in Claim~A must occur. 
If Option~(i) occurs then $v_i w_i \in D$ for all $i \in [4]$ in $G$. \AZ{Let $D^+ := D \cup \{w_1w_2, w_3w_4\} \setminus  \{v_1w_1, v_2w_2 ,v_3w_3 ,v_4w_4\}.$}
This way we get exactly the same components in \AZ{$G-D$ and $G^+-D^+$}, except the $4$-cycle $v_1 v_2 v_3 v_4 v_1$ in \AZ{$G-D$}. \AZ{Thus, $G^+-D^++A^+$ is a solution, where $A^+=A\setminus \{v_1v_3,v_2v_4\}.$}
Now assume that Option~(ii) occurs in Claim~A. In this case, \AZ{let $D^+=D\setminus \{v_1v_2,v_3v_4\}$.}
Now we get exactly the same components in $G-D$ and $G^+-D^+$, except the $4$-cycles $v_1 v_4 w_4 w_1 v_1$ and $v_2 v_3 w_3 w_2 v_2$ in \AZ{$G-D$} have been 
replaced by $w_1 w_2 w_3 w_4 w_1$ in \AZ{$G^+-D^+$}. So \AZ{in this case $G^+$ is a YES-instance, too}.

Conversely assume that $G^+$ is a YES-instance. We will now show that $G$ is a YES-instance. Let $D^+$ be the matching in $G^+$ such that 
all components in $G^+-D^+$ are $P_1$, $P_2$, $P_3$ or $C_4$ (and the $P_3$'s and $C_4$'s can be made into cliques by adding a matching of addable edges).
Let $C^+$ be the $4$-cycle $w_1 w_2 w_3 w_4 w_1$ in $G^+$.
As in the proof of Claim~A we note that $D^+$ contains no edge from $C^+$ or it contains exactly two edges from $C^+$, which are non-adjacent.
Note that $w_2w_3, w_4w_1 \in D^+$ is not possible as the path $z_4 w_4 w_3 z_3$ would be a $P_4$ in $G^+ - D^+$ and $z_3 z_4 \not\in E(G)$.
So either $w_1w_2, w_3w_4 \in D^+$ or no edge from $C^+$ is in $D^+$.

First assume that $w_1w_2, w_3w_4 \in D^+$. Then $z_1 w_1 w_4 z_4 z_1$ is a $C_4$ in $G^+ - D^+$ and $w_2 w_3 z_3$ is a $P_2$ in $G^+ - D^+$ (which may be part of 
a $C_4$ if $z_2$ exists). In this case let $D = D^+ \cup \{v_1 w_1, v_2w_2, v_3w_3, v_4w_4\} \setminus \{w_1w_2, w_3w_4\}$ and note that we obtain 
exactly the same components in $G$ and $G^+$, except the $4$-cycle $v_1 v_2 v_3 v_4 v_1$ in $G$.

So we now consider the case when no edge from $C^+$ is in $D^+$. In this case $w_1 z_1, w_3 z_3, w_4 z_4 \in D^+$. Let  $D = D^+ \cup \{v_1 v_2, v_3 v_4\}$ and note that we obtain 
exactly the same components in $G$ and $G^+$, except the $4$-cycle $w_1 w_2 w_3 w_4 w_1$ in $G^+$ has been replaced by the $4$-cycles $w_1 v_1 v_4 w_4 w_1$ and $w_2 v_2 v_3 w_3 w_2$ in $G$.
We have now proven that $G^+$ is a YES-instance if and only if $G$ is a YES-instance. 
This completes the proof of Claim~G.\smallqed{}

\2

{\bf Claim H:} {\em We may assume that at most one of $z_1$ and $z_4$ exist. Otherwise, we can reduce our instance or solve it in polynomial time. Similarly, at most one of $z_2$ and $z_3$ exist.}

\2

{\bf Proof of Claim H:} Assume that both $z_1$ and $z_4$ exist.
We first consider the case when both $z_2$ and $z_3$ also exist. By Claim~G, we have that $z_1 z_2 z_3 z_4 z_1$ is a $4$-cycle in $G$ and therefore 
 $G$ has 12 vertices, and we can 
determine if it is a YES-instance or a NO-instance in constant time. So we may assume that either none of $z_2$ and $z_3$ exist or exactly one of them exist.

\AY{First} assume that $z_3$ exists, but $z_2$ does not exist. By Claims~\AZ{F and} G, $z_1 z_4, z_3 z_4 \in E(G)$. 
Let $V_{11}=V_8 \cup \{z_1,z_3,z_4\}$. If $V(G)=V_{11}$ then we can solve our instance in constant time so assume that this is not the case. 
We consider the cases when there exists a $w \in V(G) \setminus V_{11}$ with $\{z_1,z_3\} \subseteq N(w)$ and when no such $w$ exists.
If such a $w$ exists then Option~(i) in Claim~A cannot occur, as if it did both the edges $wz_1$ and $wz_3$ would need to be in $D$ \AY{(as $z_1 w_1 w_4 z_4 z_1$ would
be a $C_4$ in $G-D$ and $w_2 w_3 z_3$ would be a $P_3$ in $G-D$),} a contradiction.
\AY{So, we are done by Claim~C.}
We may therefore now assume that no such $w$ exists. As $G$ is connected and $V(G) \not= V_{11}$ we must have a vertex $r \in  V(G) \setminus V_{11}$ 
which is adjacent to either $z_1$ or to $z_3$.  Now Option~(ii) of Claim~A cannot occur, as if it did then the edges $w_1z_1$, $w_3z_3$ and $w_4z_4$ would need to be in $D$,
which implies that either $r z_1 z_4 z_3$ or $z_1 z_4 z_3 r$ is a $P_4$ in $G-D$ where the end-points of the path are not adjacent in $G$ (as no $w$ exists in
$V(G) \setminus V_{11}$ that is adjacent to both $z_1$ and $z_3$).
\AY{So, again we are done by Claim~C.}

This completes all cases and implies that we can reduce when $z_3$ exists, but $z_2$ does not exist.

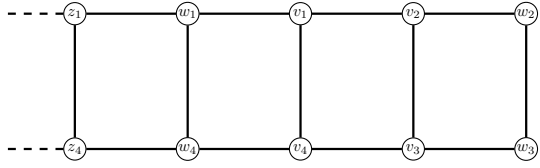
\begin{figure}[hbtp]
\begin{center}
\tikzstyle{vertexB}=[circle,draw, minimum size=14pt, scale=0.6, inner sep=0.5pt]
\tikzstyle{vertexR}=[circle,draw, color=red!100, minimum size=14pt, scale=0.6, inner sep=0.5pt]

\begin{tikzpicture}[scale=0.3]
 \node (v1) at (11,8) [vertexB] {$v_1$};
 \node (v2) at (16,8) [vertexB] {$v_2$};
 \node (v3) at (16,2) [vertexB] {$v_3$};
 \node (v4) at (11,2) [vertexB] {$v_4$};
 \node (w1) at (6,8) [vertexB] {$w_1$};
 \node (w2) at (21,8) [vertexB] {$w_2$};
 \node (w3) at (21,2) [vertexB] {$w_3$};
 \node (w4) at (6,2) [vertexB] {$w_4$};
 \node (z1) at (1,8) [vertexB] {$z_1$};
 \node (z4) at (1,2) [vertexB] {$z_4$};
\draw [line width=0.03cm] (v1) to (v2);
\draw [line width=0.03cm] (v2) to (v3);
\draw [line width=0.03cm] (v3) to (v4);
\draw [line width=0.03cm] (v4) to (v1);
\draw [line width=0.03cm] (w1) to (v1);
\draw [line width=0.03cm] (w2) to (v2);
\draw [line width=0.03cm] (w3) to (v3);
\draw [line width=0.03cm] (w4) to (v4);
\draw [line width=0.03cm] (w1) to (w4);
\draw [line width=0.03cm] (w2) to (w3);

\draw [dashed, line width=0.03cm] (z1) to (-2,8);
\draw [dashed, line width=0.03cm] (z4) to (-2,2);

\draw [line width=0.03cm] (z1) to (z4);

\draw [line width=0.03cm] (z1) to (w1);
\draw [line width=0.03cm] (z4) to (w4);
\end{tikzpicture}

\caption{The graph when $z_2$ and $z_3$ do not exist.}\label{fig:pic3}
\end{center}
\end{figure}

Analogously we can also reduce if $z_2$ exists, but $z_3$ does not exist. So the last case to consider is when neither $z_2$ nor $z_3$ exist.
In this case the graph looks like in Figure~\ref{fig:pic3}. If we had considered the cycle $w_1 v_1 v_4 w_4 w_1$ instead of $v_1 v_2 v_3 v_4 v_1$, 
then we would be in the case where $G$ has 10 vertices (when $d(z_1)=d(z_4)=2$) or we would have been in a case where at least
three $z$-vertices would exist (namely $\{w_2,w_3\} \cup \GG{N(z_1)\cup N(z_4)}\setminus \{z_1,w_1,z_4,w_4\}$), which means we can reduce as seen above.
This completes the proof of Claim~H.\smallqed{}

\2


We now return to the proof of Lemma~\ref{lem:noC4with4deg3}. 
By Claim~H we may assume that at most one of $z_1$ and $z_4$ exist and at most one of $z_2$ and $z_3$ exist.
If no $z_i$ exist then $G$ has 8 vertices, so we can solve our problem in constant time.
Therefore we may, without loss of generality assume that $z_1$ exists. If $z_2$ exists then 
let $z=z_2$ and if $z_3$ exists then let $z=z_3$ and if neither $z_2$ or $z_3$ exist then let $z$ be undefined.
We now create $G^*$ from $G$ by removing the vertices $w_1, w_2, w_3, w_4$ and adding the edge $v_1 z_1$ and if $z$ exists also the edge $v_3 z$.
We will show that $G^*$ is a YES-instance if and only if
$G$ is a YES-instance.

First assume that $G$ is a YES-instance. By Claim~C we note that Option~(i) or Option~(ii) in Claim~A must occur.
If Option~(i) occurs then $v_i w_i \in D$ for all $i \in [4]$ in $G$. Let $D^* = D \cup \{v_1 v_2, v_3 v_4\} \setminus \{v_1w_1, v_2w_2, v_3w_3, v_4w_4 \}$.
This way we get exactly the same components in $G$ and $G^*$, except the $4$-cycle $v_1 v_2 v_3 v_4 v_1$, the $P_3=z_1w_1w_4$ and the 
$P_2$ or $P_3$ on the vertex set $\{w_2, w_3\}$ or $\{w_2, w_3, z\}$ in $G$ get replaced by the $P_3=z_1 v_1 v_4$ and  
either the $P_2=v_2 v_3$ (if $z$ does not exist) or the $P_3=v_2 v_3 z$ in $G^*$.
Now assume that Option~(ii) occurs in Claim~A. In this case  $z_1 w_1 \in D$ and if $z$ exists then $w_2 z$ or $w_3z$ is in $D$.
Also, we let \AZ{$D^* = D \setminus \{v_1 v_2, v_3 v_4\}$}, \AY{where we also replace $z_1w_1$ by $z_1v_1$ and if $z$ exists then the edge $w_2 z$ or $w_3z$ gets
replaced by $v_3z$.}
Now we get exactly the same components in $G-D$ and $G^* - D^*$, except the $4$-cycles $v_1 v_4 w_4 w_1 v_1$ and $v_2 v_3 w_3 w_2 v_2$ in $G$ have been
replaced by $v_1 v_2 v_3 v_4 v_1$ in $G^*$. So in all cases $G^*$ is a YES-instance.

Conversely assume that $G^*$ is a YES-instance. We will now show that $G$ is a YES-instance. Let $D^*$ be the matching in $G^*$ such that
all components in $G^* - D^*$ are $P_1$, $P_2$, $P_3$ or $C_4$ (and the $P_3$'s and $C_4$'s can be made into cliques by adding a matching of addable edges).
As in the proof of Claim~A we note that $D^*$ contains no edge from $C$ or it contains exactly two edges from $C$, which are non-adjacent.

First assume that $D^*$ contains no edge from $C$. Then $z_1 v_1 \in D^*$ and if $z$ exists then $z v_3 \in D$. 
In this case, let $D$ be obtained from \AZ{$D^* \cup \{v_1v_2, v_3v_4,z_1 w_1\} \setminus \{z_1 v_1\}$} by replacing $v_3 z$ by the edge from $\{w_2,w_3\}$ to $z$ if $z$ exists.
Again, we obtain exactly the same components in $G-D$ and $G^* - D^*$, except the $4$-cycle $v_1 v_2 v_3 v_4 v_1$ in $G^*$ has been replaced by the
two $4$-cycles  $v_1 v_4 w_4 w_1 v_1$ and $v_2 v_3 w_3 w_2 v_2$ in $G$.

So we now consider the case when $D^*$ contains exactly two edges from $C$. In this case $G^* - D^*$ contains a $P_3$ containing the edge $z_1 v_1$ and either the edge $v_1 v_2$ or $v_1 v_4$.
Similarly, if $z$ exists, then $G^* - D^*$ contains a $P_3$ containing the edge $z v_3$ and either the edge $v_3 v_4$ or $v_3 v_2$.
In this case let  $D$ be obtained from $D^*$ by deleting these two edges on $C$ that belong to $D^*$ and adding the edges $\{v_1w_1, v_2w_2, v_3w_3, v_4w_4 \}$ instead.
This way we obtain exactly the same components in $G-D$ and $G^* - D^*$, except for the following:
The $P_3$ in $G^* - D^*$ containing $v_1$ now becomes the $P_2$  $z_1 w_1 w_4$ and the $P_3$ containing $z$ (if it exists) now becomes a $P_3$ containing the edge $w_2 w_3$ as well as the 
vertex $z$. Furthermore $G-D$ will contain the $4$-cycle $C$, which is not in $G^* - D^*$.  
This proves that $G^*$ is a YES-instance if and only if $G$ is a YES-instance.
\end{proof}

Recall that ${\cal G}_4$ denotes all $C_3$-free graphs of maximum degree at most 3 which contain no $4$-cycle, with at least three vertices of degree three.

\vspace{2mm}
 
\noindent {\bf Lemma \ref{lem:noC4with3deg3}.}
{\em  $(1,1)$-{\sc Cluster Editing} can be reduced from ${\cal G}_3$ to ${\cal G}_4$ in polynomial time.}
\begin{proof}
Let $G \in {\cal G}_3$ \AY{be arbitrary with vertex-weights $(a^*,d^*)$.}
If there is no $4$-cycle in $G$ where three of the vertices have degree three then we are done, so
let $C = v_1 v_2 v_3 v_4 v_1$ be a $4$-cycle in $G$ with $d(v_i)=3$ for all $i=1,2,3$ and $d(v_4)=2$.
Let $w_i$ be the neighbour of $v_i$ in $G$ which does not lie on $C$ for $i=1,2,3$.
Let $V_7=\{v_1,v_2,v_3,v_4,w_1,w_2,w_3\}$.
\AY{We will show that  $|V_7|=7$ (or we have a NO-instance), so assume for the sake of contradiction that this is not the case. As $G$ is $C_3$-free, we must have $w_1=w_3$,
which implies that $K_{2,3}$ is a subgraph of $G$ and we have a NO-instance by Lemma~\ref{lem:K23}. Therefore we may assume that $|V_7|=7$.}


\GG{The following claim can be proved analogously to Claim~A in Lemma~\ref{lem:noC4with4deg3} and thus its proof is omitted.}

\2

{\bf Claim A:} {\em If $(G,\add{},\del{})$ is a YES-instance and  $G-D+A$ is a solution.
Then one of the following options hold.

\begin{description}
\item[(i):] $v_i w_i \in D$ for all $i=1,2,3$ and no edge on $C$ belongs to $D$.
\item[(ii):] $v_i w_i \not\in D$ for all $i=1,2,3$ and $v_1 v_2, v_3 v_4 \in D$ and $v_2 v_3, v_4 v_1 \not\in D$.
\item[(iii):] $v_i w_i \not\in D$ for all $i=1,2,3$ and $v_2 v_3, v_4 v_1 \in D$ and $v_1 v_2, v_3 v_4 \not\in D$.
\end{description}

Furthermore, for every such $4$-cycle in $G$ either zero or two edges of the cycle belong to $D$.}



\2

\2





{\bf Claim B:} {\em If Option~(ii) in Claim~A occurs, then we must have \AY{$w_2 w_3 \in E(G)$.}
If Option~(iii) occurs in Claim~A, then we must have \AY{$w_1 w_2 \in E(G)$.} }

\2

{\bf Proof of Claim B:} \AY{For the sake of contradiction, assume that Option~(ii) occurs in Claim~A, but $w_2 w_3 \not\in E(G)$.
In this case $w_2 v_2 v_3 w_3$ is a $P_4$ in $G-D$ where $w_2 w_3 \not\in E(G)$, a contradiction.
The fact that if Option~(iii) occurs in Claim~A, then we must have $w_1 w_2 \in E(G)$ is proved analogously.}~\smallqed{}

\2

{\bf Claim C:} {\em We may assume that exactly one of the edges $w_1w_2$ and $w_2w_3$ belongs to $G$.}

\2

{\bf Proof of Claim C:} If $w_1w_2 \not\in E(G)$ and $w_2w_3 \not\in E(G)$ then, by Claim~C, neither Option~(ii) nor Option~(iii) occurs in Claim~A.
Therefore Option~(i) must occur in Claim~A. 
 This implies that $V(C)$ is an R-deletable set, so we can reduce $G$ to $G-V(C)$ by Lemma~\ref{lem:Rdel}.
We may therefore assume that at least one of $w_1w_2$ and $w_2w_3$ belong to $G$.

Assume that $w_1w_2 \in E(G)$ and $w_2w_3 \in E(G)$. As $G \in {\cal G}_3$ we note that the two $4$-cycles $w_1 v_1 v_2 w_2 w_1$ and $w_2 v_2 v_3 w_3 w_2$
both must contain a degree-$2$ vertex. This implies that $d(w_1)=d(w_3)=2$ and $V(G)=V_7$. We can therefore solve the problem in 
constant time. We may therefore assume that one of the edges $w_1w_2$ and $w_2w_3$ do not belong to $G$.
This proves Claim~C.\smallqed{}

\2

By Claim~C, we may without loss of generality assume that  $w_1w_2 \in E(G)$ and $w_2w_3 \not\in E(G)$.

\2

\AY{
{\bf Claim D:} {\em Option~(ii) in Claim~A cannot occur. Furthermore,  we may assume that $d(w_1) \leq 2$ or $d(w_2) \leq 2$. We may also assume that
$d(w_3) \leq 2$ as otherwise we can either solve our problem in polynomial time or reduce the instance. }
}

\2

{\bf Proof of Claim D:} 
Since $w_2w_3 \not\in E(G)$ by Claim B, \AY{Option~(ii)} in Claim~A cannot occur. Note that $d(w_1)=d(w_2)=3$ is not possible as $G \in {\cal G}_3$ and $d(v_1)=d(v_2)=3$.

For the sake of contradiction assume that $d(w_3)=3$ and that $N(w_3)=\{v_3,a,b\}$. Recall that \AY{Option~(ii)}  in Claim~A cannot occur. Assume that \AY{Option~(iii)} in Claim~A occurs.
This implies that either $v_4 v_3 w_3 a$ or $v_4 v_3 w_3 b$ is a $P_4$ in $G-D$ where the end-points are not adjacent in $G$, a contradiction.
So \AY{Option~(iii)} in Claim~A cannot occur, which implies that Option~(i) in Claim~A must occur.
 This implies that $V(C)$ is an R-deletable set, so we can reduce $G$ to $G-V(C)$ by Lemma~\ref{lem:Rdel}.
This proves Claim~D.\smallqed{}

We now return to the proof of Lemma~\ref{lem:noC4with3deg3}.
If $d(w_1)=3$ then let $s=1$ and otherwise let $s=2$. 
By Claim~D we note that $d(w_{3-s})=2$.

Let $G^*$ be obtained from $G-\{v_1,v_2,v_3,v_4,w_{3-s}\}$ by adding a vertex $x$ and the edge $w_s w_3$ and $w_s x$ and setting $\del{}(x)=0$ and $\add{}(x)=1$.
See Figure~\ref{fig:pic4} for an illustration.
We will now show that $G^*$ is a YES-instance if and only if $G$ is a YES-instance.

\begin{figure}[hbtp]
\begin{center}
\tikzstyle{vertexB}=[circle,draw, minimum size=14pt, scale=0.6, inner sep=0.5pt]
\tikzstyle{vertexR}=[circle,draw, color=red!100, minimum size=14pt, scale=0.6, inner sep=0.5pt]

\begin{tikzpicture}[scale=0.3]
 \node (v1) at (1,8) [vertexB] {$v_1$};
 \node (v2) at (9,8) [vertexB] {$v_2$};
 \node (v3) at (9,2) [vertexB] {$v_3$};
 \node (v4) at (1,2) [vertexB] {$v_4$};
 \node (w1) at (6,11) [vertexB] {$w_1$};
 \node (w2) at (14,11) [vertexB] {$w_2$};
 \node (w3) at (14,5) [vertexB] {$w_3$};

\draw [line width=0.03cm] (v1) to (v2);
\draw [line width=0.03cm] (v2) to (v3);
\draw [line width=0.03cm] (v3) to (v4);
\draw [line width=0.03cm] (v4) to (v1);

\draw [line width=0.03cm] (w1) to (v1);
\draw [line width=0.03cm] (w2) to (v2);
\draw [line width=0.03cm] (w3) to (v3);

\draw [line width=0.03cm] (w1) to (w2);

\draw [dashed, line width=0.03cm] (w2) to (17,11);
\draw [dashed, line width=0.03cm] (w3) to (17,5);

\draw (22,7) node {{\Large $\Rightarrow$}};
\end{tikzpicture}
\begin{tikzpicture}[scale=0.3]
\draw (-1,2) node {\mbox{ }};
\draw (14,11) node {\mbox{ }};
\draw (6,9.9) node {{\tiny $\del{}(x)=0$}};

 \node (x) at (6,11) [vertexB] {$x$};
 \node (ws) at (14,11) [vertexB] {$w_s$};
 \node (w3) at (14,5) [vertexB] {$w_3$};


\draw [line width=0.03cm] (x) to (ws);
\draw [line width=0.03cm] (ws) to (w3);

\draw [dashed, line width=0.03cm] (w2) to (17,11);
\draw [dashed, line width=0.03cm] (w3) to (17,5);

\draw (8,5) node {$G^*$};
\end{tikzpicture}
\caption{An illustration of $G^*$ in the proof of Lemma~\ref{lem:noC4with3deg3} when $s=2$.}\label{fig:pic4}
\end{center}
\end{figure}
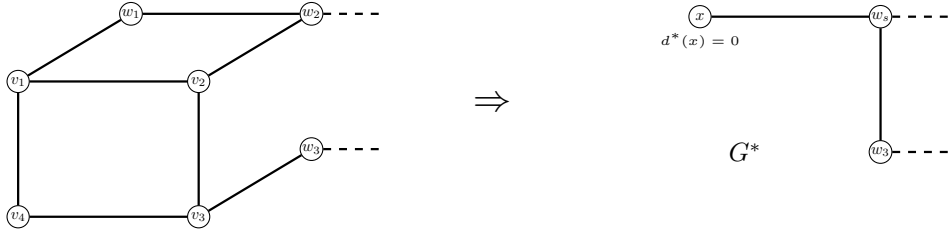

First assume that $G$ is a YES-instance. By Claim~D we note that Option~(i) or \AY{Option~(iii)} in Claim~A must occur.
If Option~(i) occurs then $v_i w_i \in D$ for all $i \in [3]$ in $G$. 
Let $D^* = D \cup \{w_s w_3 \} - \{v_1 w_1, v_2 w_2, v_3 w_3\}$. Now
$G-D$ and $G^* - D^*$ have exactly the same components except the $4$-cycle $v_1 v_2 v_3 v_4 v_1$ in $G-D$ and the component containing
$w_1 w_2$ have been replaced by a component containing $x w_s$ in $G^* - D^*$. However the component containing $w_1 w_2$ in $G-D$ is
isomorphic with the component containing $x w_s$ in $G^*-D^*$. So we get a solution for $G^*$. See Figure~\ref{fig:pic5} for an illustration.

\begin{figure}[hbtp]
\begin{center}
\tikzstyle{vertexB}=[circle,draw, minimum size=14pt, scale=0.6, inner sep=0.5pt]
\tikzstyle{vertexR}=[circle,draw, color=red!100, minimum size=14pt, scale=0.6, inner sep=0.5pt]

\begin{tikzpicture}[scale=0.3]
\draw (1,10) node {$G - D$};
 \node (v1) at (1,8) [vertexB] {$v_1$};
 \node (v2) at (9,8) [vertexB] {$v_2$};
 \node (v3) at (9,2) [vertexB] {$v_3$};
 \node (v4) at (1,2) [vertexB] {$v_4$};
 \node (w1) at (6,11) [vertexB] {$w_1$};
 \node (w2) at (14,11) [vertexB] {$w_2$};
 \node (w3) at (14,5) [vertexB] {$w_3$};

\draw [line width=0.09cm] (v1) to (v2);
\draw [line width=0.09cm] (v2) to (v3);
\draw [line width=0.09cm] (v3) to (v4);
\draw [line width=0.09cm] (v4) to (v1);

 \draw [dotted, line width=0.01cm] (w1) to (v1);
 \draw [dotted, line width=0.01cm] (w2) to (v2);
 \draw [dotted, line width=0.01cm] (w3) to (v3);

\draw [line width=0.09cm] (w1) to (w2);

\draw [dashed, line width=0.03cm] (w2) to (17,11);
\draw [dashed, line width=0.03cm] (w3) to (17,5);

\draw (22,7) node {{\Large $\Rightarrow$}};
\end{tikzpicture}
\begin{tikzpicture}[scale=0.3]
\draw (-1,2) node {\mbox{ }};
\draw (14,11) node {\mbox{ }};
\draw (6,9.9) node {{\tiny $\del{}(x)=0$}};

 \node (x) at (6,11) [vertexB] {$x$};
 \node (ws) at (14,11) [vertexB] {$w_s$};
 \node (w3) at (14,5) [vertexB] {$w_3$};


\draw [line width=0.09cm] (x) to (ws);
\draw [dotted, line width=0.01cm] (ws) to (w3);

\draw [dashed, line width=0.03cm] (w2) to (17,11);
\draw [dashed, line width=0.03cm] (w3) to (17,5);

\draw (8,5) node {$G^* - D^*$};
\end{tikzpicture}
\caption{An illustration of $G-D$ and $G^* - D^*$ when part~(i) of Claim~A occurs.}\label{fig:pic5}
\end{center}
\end{figure}
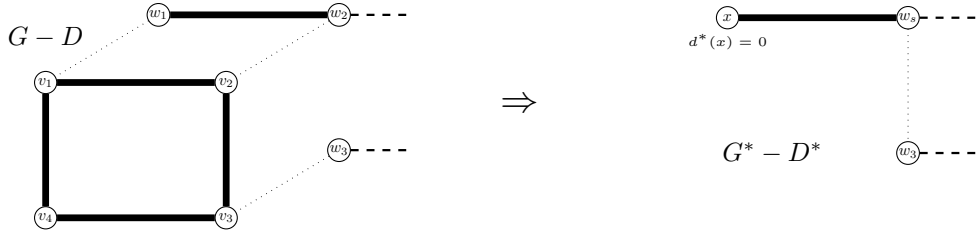

If \AY{Option~(iii)} occurs then $v_2 v_3, v_4 v_1 \in D$.
Let $D^* = D  - \{v_2 v_3, v_4 v_1\}$. Now
$G-D$ and $G^* - D^*$ have exactly the same components except the $4$-cycle $v_1 w_1 w_2 v_2 v_1$ and the path $v_4 v_3 w_3$ in $G-D$ 
have been replaced by the path $x w_s w_3$ in $G^* - D^*$. 
So we get a solution for $G^*$. See Figure~\ref{fig:pic6} for an illustration.

\begin{figure}[hbtp]
\begin{center}
\tikzstyle{vertexB}=[circle,draw, minimum size=14pt, scale=0.6, inner sep=0.5pt]
\tikzstyle{vertexR}=[circle,draw, color=red!100, minimum size=14pt, scale=0.6, inner sep=0.5pt]

\begin{tikzpicture}[scale=0.3]
\draw (1,10) node {$G - D$};

 \node (v1) at (1,8) [vertexB] {$v_1$};
 \node (v2) at (9,8) [vertexB] {$v_2$};
 \node (v3) at (9,2) [vertexB] {$v_3$};
 \node (v4) at (1,2) [vertexB] {$v_4$};
 \node (w1) at (6,11) [vertexB] {$w_1$};
 \node (w2) at (14,11) [vertexB] {$w_2$};
 \node (w3) at (14,5) [vertexB] {$w_3$};

\draw [line width=0.09cm] (v1) to (v2);
\draw [dotted, line width=0.01cm] (v2) to (v3);
\draw [line width=0.09cm] (v3) to (v4);
\draw [dotted, line width=0.01cm] (v4) to (v1);

 \draw [line width=0.09cm] (w1) to (v1);
 \draw [line width=0.09cm] (w2) to (v2);
 \draw [line width=0.09cm] (w3) to (v3);

\draw [line width=0.09cm] (w1) to (w2);

\draw [dotted, line width=0.01cm] (w2) to (17,11);
\draw [dotted, line width=0.01cm] (w3) to (17,5);

\draw (22,7) node {{\Large $\Rightarrow$}};
\end{tikzpicture}
\begin{tikzpicture}[scale=0.3]
\draw (-1,2) node {\mbox{ }};
\draw (14,11) node {\mbox{ }};
\draw (6,9.9) node {{\tiny $\del{}(x)=0$}};

 \node (x) at (6,11) [vertexB] {$x$};
 \node (ws) at (14,11) [vertexB] {$w_s$};
 \node (w3) at (14,5) [vertexB] {$w_3$};


\draw [line width=0.09cm] (x) to (ws);
\draw [line width=0.09cm] (ws) to (w3);

\draw [dotted, line width=0.01cm] (w2) to (17,11);
\draw [dotted, line width=0.01cm] (w3) to (17,5);

\draw (8,5) node {$G^* - D^*$};
\end{tikzpicture}
\caption{An illustration of $G-D$ and $G^* - D^*$ when part~(ii) of Claim~A occurs.}\label{fig:pic6}
\end{center}
\end{figure}
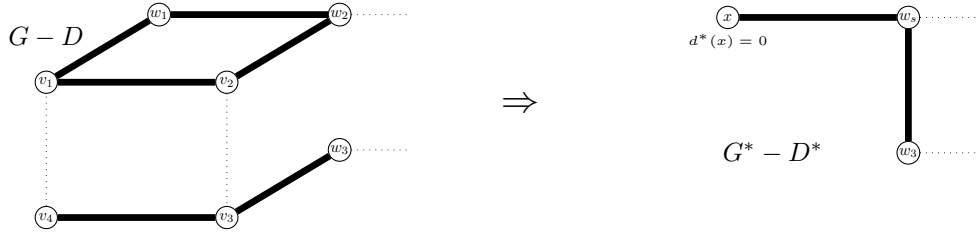

Conversely assume that $G^*$ is a YES-instance. We will now show that $G$ is a YES-instance. Let $D^*$ be the matching in $G^*$ such that
all components in $G^*-D^*$ are $P_1$, $P_2$, $P_3$ or $C_4$ (and the $P_3$'s and $C_4$'s can be made into cliques by adding a matching of addable edges).
As $\del{}(x)=0$ in $G^*$ we note that $x w_s \not\in D^*$. We now consider the case when $w_s w_3$ belongs to $D^*$ and then the case when it does not.

So first assume that $w_s w_3 \in D^*$. In this case let $D = D^* + \{v_1w_1, v_2 w_2, v_3 w_3 \} - \{w_s w_3\}$. 
Now $G-D$ and $G^* - D^*$ have exactly the same components except the $4$-cycle $v_1 v_4 w_4 w_1 v_1$ is in $G-D$ and the component containing
$x w_s$ have been replaced by an isomorphic  component containing $w_1 w_2$ in $G^* - D^*$. 
So we get a solution for $G$ (this corresponds to the reverse process to that illustrated in Figure~\ref{fig:pic5}).

Now assume that $w_s w_3 \not\in D^*$. This implies that any edges incident with $w_s$ and $w_3$, different from $w_s w_3$ and $x w_s$ must belong to \AY{$D^*$} (due
to the path $x w_s w_3$ in $G^* - D^*$). Let $D = D^* + \{v_2v_3, v_4 v_1 \}$.
Now $G-D$ and $G^* - D^*$ have exactly the same components except the \AY{path $x w_s w_3$ in $G^* - D^*$ has been replaced by the 
$4$-cycle $v_1 w_1 w_2 v_2 v_1$ and the path  $v_4 v_3 w_3$ in $G-D$.}
So we get a solution for $G$ (this corresponds to the reverse process to that illustrated in Figure~\ref{fig:pic6}). 

The above reduction completes the proof of the lemma.
\end{proof}

Recall that ${\cal G}_5$ denotes all $\{C_3,C_4\}$-free graphs of maximum degree at most 3.

\vspace{2mm}
 
\noindent {\bf Lemma \ref{lem:noC4}.}
{\em  $(1,1)$-{\sc Cluster Editing} can be reduced from ${\cal G}_4$ to ${\cal G}_5$ in polynomial time.}
\begin{proof}
Let $G \in {\cal G}_4$ \AY{be arbitrary with vertex-weights $(a^*,d^*)$.}
We may assume that $G$ is connected.
If there is no $4$-cycle in $G$ then we are done, so
let $C = v_1 v_2 v_3 v_4 v_1$ be any $4$-cycle in $G$.
As  $G \in {\cal G}_4$, we know that at most two of the vertices in $C$ have degree three.
If no vertex of $G$ has degree three then $V(G)=V(C)$ and we can solve our problem in constant time.
So we may assume that at least one vertex on $C$ has degree three.

\AY{If $\del{}(v_i)=0$ for any $i \in [4]$ then any solution $G-D+A$ to our instance must have all edges between $C$ and $V(G)-V(C)$ in $D$, which implies that
$V(C)$  is an R-deletable set. So by Lemma \ref{lem:Rdel}, we can reduce the instance in this case. Thus, we may assume that $\del{}(v_i)=1$ for all $i \in [4]$.}

We first consider the case when exactly one vertex on $C$ has degree three. Without loss of generality assume that $d(v_1)=3$ and $d(v_2)=d(v_3)=d(v_4)=2$. 
\AZ{Also, without loss of generality, assume that $a^*(v_2)\ge a^*(v_4)$ as otherwise we may rename the vertices in $C$ such that this is the case.}
Let $e$ be the edge incident with $v_1$ that does not belong to $C$.
Let $G^*$ be obtained from $G^*=G-\{v_3,v_4\}$ \AY{by letting $\del{}(v_2)=0$}.
Now $G^*$ is a YES-instance if and only if $G$ is a YES-instance, due to the following.
Deleting $e$ from $G$ corresponds to deleting $e$ from $G^*$ and vice versa.
Not deleting $e$ from $G$ corresponds to deleting two edges from $C$ in $G$ which corresponds to not deleting $e$ in $G^*$.
So $G$ is a YES-instance if and only if $G^*$ is a YES-instance.
We may therefore assume that exactly two vertices on $C$ have degree 3.

Let $x$ and $y$ be the two vertices on $C$ of degree 3. Let $x'$ be the neighbour of $x$ not on $C$ and let
$y'$ be the neighbour of $y$ not on $C$. Analogously to Claim~A in both Lemma~\ref{lem:noC4with4deg3} and Lemma~\ref{lem:noC4with3deg3} we note that 
if our instance has a solution and $D$ is the set of edges that are deleted from $G$ in the solution, then one of the following holds.

\begin{description}
\item[(i):] $x x', y y' \in D$  and no edge on $C$ belongs to $D$.
\item[(ii):] $x x', y y' \not\in D$  and $v_1 v_2, v_3 v_4 \in D$ and $v_2 v_3, v_4 v_1 \not\in D$.
\item[(iii):] $x x', y y' \not\in D$  and $v_2 v_3, v_4 v_1 \in D$ and $v_1 v_2, v_3 v_4 \not\in D$.
\end{description}

First consider the case when $x$ and $y$ are adjacent vertices on $C$. Then without loss of generality assume that $x=v_1$ and $y=v_4$. 
As $G$ is $C_3$-free we have $x' \not= y'$. First assume that $x' y' \in E(G)$. Then \AY{$V(G)=\{v_1, v_2, v_3, v_4, x', y'\}$} as 
$G \in {\cal G}_4$, which implies that the cycle $x' x y y' x'$ ($=x' v_1 v_4 y' x'$) has at most two vertices of degree three, which are $x$ and $y$.
So in this case we can solve our problem in constant time. Therefore, we may assume that $x' y'  \not\in E(G)$. This also implies that 
part~(ii) above is not possible, as if part~(ii) occurs then $x' v_1 v_4 y'$ is a $P_4$ in $G-D$ where $x' y' \not\in E(G)$, a contradiction.
So either part~(i) or part~(iii) occurs. This is illustrated in  Figure~\ref{fig:pic7}.

\begin{figure}[hbtp]
\begin{center}
\tikzstyle{vertexB}=[circle,draw, minimum size=14pt, scale=0.65, inner sep=0.5pt]
\tikzstyle{vertexR}=[circle,draw, color=red!100, minimum size=14pt, scale=0.6, inner sep=0.5pt]

\begin{tikzpicture}[scale=0.2]
\draw (8,0) node {$G$};

 \node (v1) at (6,8) [vertexB] {$x$};
 \node (v2) at (14,8) [vertexB] {$v_2$};
 \node (v3) at (14,2) [vertexB] {$v_3$};
 \node (v4) at (6,2) [vertexB] {$y$};
 \node (w1) at (1,8) [vertexB] {$x'$};
 \node (w4) at (1,2) [vertexB] {$y'$};

\draw [line width=0.03cm] (v1) to (v2);
\draw [line width=0.03cm] (v2) to (v3);
\draw [line width=0.03cm] (v3) to (v4);
\draw [line width=0.03cm] (v4) to (v1);


 \draw [line width=0.03cm] (v1) to (w1);
 \draw [line width=0.03cm] (v4) to (w4);
\draw (20,5) node {{\Large $\Rightarrow$}};
\end{tikzpicture} 
\begin{tikzpicture}[scale=0.2]
\draw (-4,5) node {\mbox{ }};
\draw (8,0) node {$G - D$};
 \node (v1) at (6,8) [vertexB] {$x$};
 \node (v2) at (14,8) [vertexB] {$v_2$};
 \node (v3) at (14,2) [vertexB] {$v_3$};
 \node (v4) at (6,2) [vertexB] {$y$};
 \node (w1) at (1,8) [vertexB] {$x'$};
 \node (w4) at (1,2) [vertexB] {$y'$};

\draw [line width=0.09cm] (v1) to (v2);
\draw [line width=0.09cm] (v2) to (v3);
\draw [line width=0.09cm] (v3) to (v4);
\draw [line width=0.09cm] (v4) to (v1);


 \draw [dotted, line width=0.01cm] (v1) to (w1);
 \draw [dotted, line width=0.01cm] (v4) to (w4);
\draw (20,5) node {{\small or}};
\end{tikzpicture}
\begin{tikzpicture}[scale=0.2]
\draw (-4,5) node {\mbox{ }};
\draw (8,0) node {$G - D$};
 \node (v1) at (6,8) [vertexB] {$x$};
 \node (v2) at (14,8) [vertexB] {$v_2$};
 \node (v3) at (14,2) [vertexB] {$v_3$};
 \node (v4) at (6,2) [vertexB] {$y$};
 \node (w1) at (1,8) [vertexB] {$x'$};
 \node (w4) at (1,2) [vertexB] {$y'$};

\draw [line width=0.09cm] (v1) to (v2);
\draw [dotted, line width=0.01cm] (v2) to (v3);
\draw [line width=0.09cm] (v3) to (v4);
\draw [dotted, line width=0.01cm] (v4) to (v1);

 \draw [line width=0.09cm] (v1) to (w1);
 \draw [line width=0.09cm] (v4) to (w4);
\end{tikzpicture}

\caption{Possible options for $G-D$ when $x$ and $y$ are adjacent.}\label{fig:pic7}
\end{center}
\end{figure}
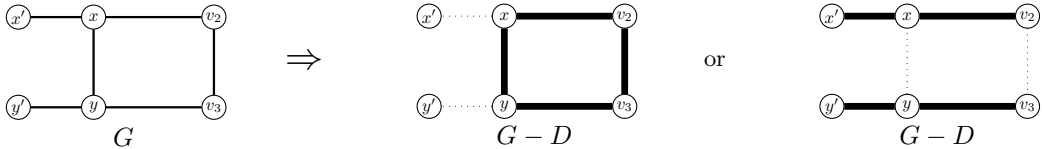

If $x$ and $y$ are not adjacent, then  without loss of generality assume that $x=v_1$ and $y=v_3$. If $x'=y'$ then 
\AY{$G$ contains a $K_{2,3}$-subgraph, which implies that our instance is a NO-instance by Lemma~\ref{lem:K23}.}
Thus, we may assume that $x' \not= y'$. 
Now  part~(i), part~(ii) and part~(iii) are illustrated in  Figure~\ref{fig:pic8}.

\begin{figure}[hbtp]
\begin{center}
\tikzstyle{vertexB}=[circle,draw, minimum size=14pt, scale=0.65, inner sep=0.5pt]
\tikzstyle{vertexR}=[circle,draw, color=red!100, minimum size=14pt, scale=0.6, inner sep=0.5pt]

\begin{tikzpicture}[scale=0.2]
\draw (10,-3) node {$G$};

 \node (v1) at (6,8) [vertexB] {$v_1$};
 \node (v2) at (14,8) [vertexB] {$v_2$};
 \node (v3) at (14,2) [vertexB] {$v_3$};
 \node (v4) at (6,2) [vertexB] {$v_3$};
 \node (w1) at (13,11) [vertexB] {$x'$};
 \node (w3) at (7,-1) [vertexB] {$y'$};

\draw [line width=0.03cm] (v1) to (v2);
\draw [line width=0.03cm] (v2) to (v3);
\draw [line width=0.03cm] (v3) to (v4);
\draw [line width=0.03cm] (v4) to (v1);


 \draw [line width=0.03cm] (v1) to (w1);
 \draw [line width=0.03cm] (v3) to (w3);
\draw (20,5) node {{\Large $\Rightarrow$}};
\end{tikzpicture}
\begin{tikzpicture}[scale=0.2]
\draw (0,5) node {\mbox{ }};
\draw (10,-3) node {$G$};

 \node (v1) at (6,8) [vertexB] {$v_1$};
 \node (v2) at (14,8) [vertexB] {$v_2$};
 \node (v3) at (14,2) [vertexB] {$v_3$};
 \node (v4) at (6,2) [vertexB] {$v_3$};
 \node (w1) at (13,11) [vertexB] {$x'$};
 \node (w3) at (7,-1) [vertexB] {$y'$};

\draw [line width=0.09cm] (v1) to (v2);
\draw [line width=0.09cm] (v2) to (v3);
\draw [line width=0.09cm] (v3) to (v4);
\draw [line width=0.09cm] (v4) to (v1);


 \draw [dotted, line width=0.01cm] (v1) to (w1);
 \draw [dotted, line width=0.01cm] (v3) to (w3);
\draw (20,5) node {or};
\end{tikzpicture}
\begin{tikzpicture}[scale=0.2]
\draw (0,5) node {\mbox{ }};
\draw (10,-3) node {$G$};

 \node (v1) at (6,8) [vertexB] {$v_1$};
 \node (v2) at (14,8) [vertexB] {$v_2$};
 \node (v3) at (14,2) [vertexB] {$v_3$};
 \node (v4) at (6,2) [vertexB] {$v_3$};
 \node (w1) at (13,11) [vertexB] {$x'$};
 \node (w3) at (7,-1) [vertexB] {$y'$};

\draw [line width=0.09cm] (v1) to (v2);
\draw [dotted, line width=0.01cm] (v2) to (v3);
\draw [line width=0.09cm] (v3) to (v4);
\draw [dotted, line width=0.01cm] (v4) to (v1);


 \draw [line width=0.09cm] (v1) to (w1);
 \draw [line width=0.09cm] (v3) to (w3);
\draw (20,5) node {or};
\end{tikzpicture}
\begin{tikzpicture}[scale=0.2]
\draw (0,5) node {\mbox{ }};
\draw (10,-3) node {$G$};

 \node (v1) at (6,8) [vertexB] {$v_1$};
 \node (v2) at (14,8) [vertexB] {$v_2$};
 \node (v3) at (14,2) [vertexB] {$v_3$};
 \node (v4) at (6,2) [vertexB] {$v_3$};
 \node (w1) at (13,11) [vertexB] {$x'$};
 \node (w3) at (7,-1) [vertexB] {$y'$};

\draw [dotted, line width=0.01cm] (v1) to (v2);
\draw [line width=0.09cm] (v2) to (v3);
\draw [dotted, line width=0.01cm] (v3) to (v4);
\draw [line width=0.09cm] (v4) to (v1);

 \draw [line width=0.09cm] (v1) to (w1);
 \draw [line width=0.09cm] (v3) to (w3);
\end{tikzpicture}

\caption{Possibble options for $G-D$ when $x$ and $y$ are not adjacent.}\label{fig:pic8}
\end{center}
\end{figure}

So whether $x$ and $y$ is adjacent or not, we will in $G-D$ either have both edges $xx'$ and $yy'$ in $D$ or neither in $D$. 
And if neither is in $D$ then each of $x'$ and $y'$ is the endpoint of a $P_3$ containing two vertices from $C$. So in both cases we will remove
$V(C)$ from $G$ and add the gadget $G(x,y)$, illustrated in Figure~\ref{fig:pic9}, to $G$ instead. Note that 
$V(G(x,y))=\{q_1, x, q_3, q_4, y, q_6 \}$ and $E(G(x,y))=\{q_1 x, x q_3, q_3 q_4, q_4 y, y q_5 \}$ and we add the edges $x x'$ and $y y'$.
Furthermore $\del{}(q_1)=\del{}(q_6)=0$ and all other $\del{}$-values and $\add{}$-values are one.

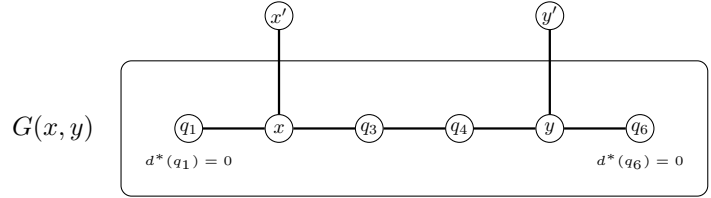
\begin{figure}[hbtp]
\begin{center}
\tikzstyle{vertexB}=[circle,draw, minimum size=14pt, scale=0.75, inner sep=0.5pt]
\tikzstyle{vertexR}=[circle,draw, color=red!100, minimum size=14pt, scale=0.6, inner sep=0.5pt]
\begin{tikzpicture}[scale=0.3]
\draw (0,6) node {\mbox{ }};
\draw (-4,6) node {$G(x,y)$};
\draw[rounded corners] (-1,3) rectangle (25,9) {};  

 \node (q1) at (2,6) [vertexB] {$q_1$};
 \node (q2) at (6,6) [vertexB] {$x$};
 \node (q3) at (10,6) [vertexB] {$q_3$};
 \node (q4) at (14,6) [vertexB] {$q_4$};
 \node (q5) at (18,6) [vertexB] {$y$};
 \node (q6) at (22,6) [vertexB] {$q_6$};

 \node (xp) at (6,11) [vertexB] {$x'$};
 \node (yp) at (18,11) [vertexB] {$y'$};

\draw (2,4.5) node {{\tiny $\del{}(q_1)=0$}};
\draw (22,4.5) node {{\tiny $\del{}(q_6)=0$}};

\draw [line width=0.03cm] (q1) to (q2);
\draw [line width=0.03cm] (q2) to (q3);
\draw [line width=0.03cm] (q3) to (q4);
\draw [line width=0.03cm] (q4) to (q5);
\draw [line width=0.03cm] (q5) to (q6);

\draw [line width=0.03cm] (q2) to (xp);
\draw [line width=0.03cm] (q5) to (yp);
\end{tikzpicture}

\caption{The gadget $G(x,y)$.}\label{fig:pic9}
\end{center}
\end{figure}

It is not difficult to check that if the modified graph, $G'$, is a YES-instance and $D'$ the matching that gets removed from $G'$, then we must have one of the two cases
illustrated in Figure~\ref{fig:pic10}. 

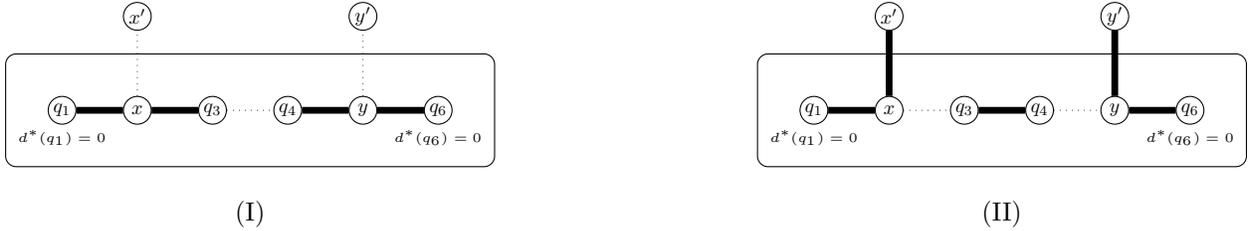
\begin{figure}[hbtp]
\begin{center}
\tikzstyle{vertexB}=[circle,draw, minimum size=14pt, scale=0.75, inner sep=0.5pt]
\tikzstyle{vertexR}=[circle,draw, color=red!100, minimum size=14pt, scale=0.6, inner sep=0.5pt]
\begin{tikzpicture}[scale=0.25]
\draw (0,6) node {\mbox{ }};
\draw[rounded corners] (-1,3) rectangle (25,9) {};  

 \node (q1) at (2,6) [vertexB] {$q_1$};
 \node (q2) at (6,6) [vertexB] {$x$};
 \node (q3) at (10,6) [vertexB] {$q_3$};
 \node (q4) at (14,6) [vertexB] {$q_4$};
 \node (q5) at (18,6) [vertexB] {$y$};
 \node (q6) at (22,6) [vertexB] {$q_6$};

 \node (xp) at (6,11) [vertexB] {$x'$};
 \node (yp) at (18,11) [vertexB] {$y'$};

\draw (2,4.5) node {{\tiny $\del{}(q_1)=0$}};
\draw (22,4.5) node {{\tiny $\del{}(q_6)=0$}};

\draw [line width=0.09cm] (q1) to (q2);
\draw [line width=0.09cm] (q2) to (q3);
\draw [dotted, line width=0.01cm] (q3) to (q4);
\draw [line width=0.09cm] (q4) to (q5);
\draw [line width=0.09cm] (q5) to (q6);

\draw [dotted, line width=0.01cm] (q2) to (xp);
\draw [dotted, line width=0.01cm] (q5) to (yp);
\draw (12,0.5) node {(I)};
\end{tikzpicture} \hfill
\begin{tikzpicture}[scale=0.25]
\draw (0,6) node {\mbox{ }};
\draw[rounded corners] (-1,3) rectangle (25,9) {};  

 \node (q1) at (2,6) [vertexB] {$q_1$};
 \node (q2) at (6,6) [vertexB] {$x$};
 \node (q3) at (10,6) [vertexB] {$q_3$};
 \node (q4) at (14,6) [vertexB] {$q_4$};
 \node (q5) at (18,6) [vertexB] {$y$};
 \node (q6) at (22,6) [vertexB] {$q_6$};

 \node (xp) at (6,11) [vertexB] {$x'$};
 \node (yp) at (18,11) [vertexB] {$y'$};

\draw (2,4.5) node {{\tiny $\del{}(q_1)=0$}};
\draw (22,4.5) node {{\tiny $\del{}(q_6)=0$}};

\draw [line width=0.09cm] (q1) to (q2);
\draw [dotted, line width=0.01cm] (q2) to (q3);
\draw [line width=0.09cm] (q3) to (q4);
\draw [dotted, line width=0.01cm] (q4) to (q5);
\draw [line width=0.09cm] (q5) to (q6);

\draw [line width=0.09cm] (q2) to (xp);
\draw [line width=0.09cm] (q5) to (yp);
\draw (12,0.5) node {(II)};
\end{tikzpicture} 
\caption{The two options, (I) and (II), for $G'-D'$.}\label{fig:pic10}
\end{center}
\end{figure}

So $G''$ has either  both edges $xx'$ and $yy'$ in $D'$ or neither in $D'$.
And if neither are in $D'$ then both $x'$ and $y'$ are endpoints of a $P_3$ containing two vertices from the gadget $G(x,y)$. 
This was exactly the same property we had in $G$, so $G$ is a YES-instance if and only if $G'$ is.

Furthermore, this reduction deletes a $4$-cycle. So even though the reduction adds an edge (that is, $|E(G')|=|E(G)|+1$) the reduction 
can be carried out at most as many times as there are $4$-cycles in $G$ which is at most a polynomial number.
Therefore repeatedly performing the above reduction creates a graph in ${\cal G}_5$ in polynomial time, as desired.
\end{proof}

Recall that ${\cal G}_6$ denotes all $\{C_3,C_4\}$-free graphs of maximum degree at most 3, such that the following holds.
\begin{itemize}
\item All vertices $v$ of degree at least 2 have $\del{}(v)=1$.
\item No vertex $v$ of degree 3 is adjacent to a vertex $w$ with $\add{}(w)=0$.
\item The vertices of degree 3 form an independent set.
\end{itemize}

\noindent{\bf Lemma \ref{lem:dZero}.}
{\em  $(1,1)$-{\sc Cluster Editing} can be reduced from ${\cal G}_5$ to ${\cal G}_6$ in polynomial time.}
\begin{proof}
Let $G \in {\cal G}_5$ \AY{be arbitrary with vertex-weights $(a^*,d^*)$.}
Note that if some vertex $v$ has $d(v)=3$ and $\del{}(v)=0$ then $(G,\add{},\del{})$ is a NO-instance, so we may assume that $\del{}(v)=1$ for all vertices, $v$, of degree 3.
If $d(v)=2$ and $\del{}(v)=0$ , then $(G,\add{},\del{})$ is a YES-instance if and only if both edges incident to $v$ have to remain in $G-D$ and therefore form a $P_3.$ Let $N(x)=\{u,v\},$ and note that $\{x,u,v\}$ is an R-deletable set. So by Lemma \ref{lem:Rdel}, we can reduce the instance. 

Assume that there exists a $uw \in E(G)$ with  $d(u)=3$ and $\add{}(w)=0$.
Assume that $G$ is a YES-instance and $G-D+A$ is a cluster graph. If $uw \not\in D$, then $w$ is the endpoint of a $P_3$ in $G-D$ which 
is a contradiction as $\add{}(w)=0$. So, we must have $uw \in D$. Since $uw \in D$, 
$\del{}(u)=\del{}(\AY{w})=1$ or $(G,a^*,d^*)$ is a NO-instance. Form a new instance with 
$G'=G- uw$ and $\del{}(u)=\del{}(\AY{w})=0$. Then $G$ is a YES-instance if and only if $G'$ is a YES-instance.
This reduction allows us to \AY{get} rid of all cases where a degree three vertex is adjacent to a vertex with $\add{}$-value zero.

Now assume that there exists a $uw \in E(G)$ with  $d(u)=d(w)=3$. 
Assume that $G$ is a YES-instance and $G-D+A$ is a cluster graph. If $uw \not\in D$, then we obtain a $P_4$ in $G-D$ containing the edge $uw$,
a contradiction. So, we must have $uw \in D$. As $d(u)=d(w)=3$, we have $d^*(u)=d^*(w)=1$.
Now we form a new instance with 
$G'=G- uw$ and $\del{}(u)=\del{}(v)=0$ as above and note 
that $G$ is a YES-instance if and only if $G'$ is a YES-instance. Thus, we may assume that $uw \not\in E(G)$ if $d(u)=d(w)=3$. 
This completes the proof.
\end{proof}

\section{Discussion}\label{sec:disc}

The main result of our paper concludes a complete dichotomy of complexity of $(a,d)$-{Cluster Editing}. 
We proved that $(1,1)$-{Cluster Editing} can be solved in polynomial time. 
Our proof consists of two stages: (i) providing a series of five polynomial-time reductions to $\{C_3,C_4\}$-free  graphs of maximum degree at most 3, and 
(ii)  designing a polynomial-time algorithm for solving $(1,1)$-{Cluster Editing} on $\{C_3,C_4\}$-free graphs of maximum degree at most 3. 
While Stage 2 is relatively short, Stage 1 is not.
Moreover, while our reduction from all graphs to $C_3$-free graphs of maximum degree at most 3 is not hard, getting rid of 4-cycles required 
a lengthy and non-trivial series of four polynomial-time reductions.  It would be interesting to see whether the four reductions can be replaced by a shorter and simpler series of reductions.

\paragraph{Acknowledgement} We are thankful to the referees for several useful comments and suggestions.
 


\begin{thebibliography}{10}

\bibitem{Abu-Khzam2017}
Faisal~N. Abu-Khzam.
\newblock On the complexity of multi-parameterized cluster editing.
\newblock {\em Journal of Discrete Algorithms}, 45:26--34, 2017.

\bibitem{Barr0AC19}
Joseph~R. Barr, Peter Shaw, Faisal~N. Abu{-}Khzam, and Jikang Chen.
\newblock Combinatorial text classification: the effect of multi-parameterized
  correlation clustering.
\newblock In {\em First International Conference on Graph Computing, {GC} 2019,
  Laguna Hills, CA, USA, September 25-27, 2019}, pages 29--36. {IEEE}, 2019.

\bibitem{BarrSATY20}
J.R. Barr, P.~Shaw, F.N. Abu{-}Khzam, T.~Thatcher, and S.~Yu.
\newblock Vulnerability rating of source code with token embedding and
  combinatorial algorithms.
\newblock {\em Int. J. Semantic Comput.}, 14(4):501--516, 2020.

\bibitem{Bocker12}
Sebastian B{\"{o}}cker.
\newblock A golden ratio parameterized algorithm for cluster editing.
\newblock {\em J. Discrete Algorithms}, 16:79--89, 2012.

\bibitem{BockerBBT09}
Sebastian B{\"{o}}cker, Sebastian Briesemeister, Quang Bao~Anh Bui, and Anke
  Tru{\ss}.
\newblock Going weighted: Parameterized algorithms for cluster editing.
\newblock {\em Theor. Comput. Sci.}, 410(52):5467--5480, 2009.

\bibitem{BockerBK11}
Sebastian B{\"{o}}cker, Sebastian Briesemeister, and Gunnar~W. Klau.
\newblock Exact algorithms for cluster editing: Evaluation and experiments.
\newblock {\em Algorithmica}, 60(2):316--334, 2011.

\bibitem{Cai96}
Liming Cai.
\newblock Fixed-parameter tractability of graph modification problems for
  hereditary properties.
\newblock {\em Inf. Process. Lett.}, 58(4):171--176, 1996.

\bibitem{CaoC12}
Yixin Cao and Jianer Chen.
\newblock Cluster editing: Kernelization based on edge cuts.
\newblock {\em Algorithmica}, 64(1):152--169, 2012.

\bibitem{ChenM12}
Jianer Chen and Jie Meng.
\newblock A $2k$ kernel for the cluster editing problem.
\newblock {\em J. Comput. Syst. Sci.}, 78(1):211--220, 2012.

\bibitem{GrammGHN05}
J.~Gramm, J.~Guo, F.~H{\"u}ffner, and R.~Niedermeier.
\newblock Graph-modeled data clustering: exact algorithms for clique
  generation.
\newblock {\em Theory Comput. Syst.}, 38(4):373--392, 2005.

\bibitem{Guo09}
J.~Guo.
\newblock A more effective linear kernelization for cluster editing.
\newblock {\em Theor. Comput. Sci.}, 410:718--726, 2009.

\bibitem{KomusiewiczU2012}
Christian Komusiewicz and Johannes Uhlmann.
\newblock Cluster editing with locally bounded modifications.
\newblock {\em Discrete Applied Mathematics}, 160(15):2259--2270, 2012.

\bibitem{KrivanekM86}
M.~Krivanek and J.~Moravek.
\newblock {NP}-hard problems in hierarchical-tree clustering.
\newblock {\em Acta Inform.}, 23(3):311--323, 1986.

\bibitem{ShamirST04}
Ron Shamir, Roded Sharan, and Dekel Tsur.
\newblock Cluster graph modification problems.
\newblock {\em Discret. Appl. Math.}, 144(1-2):173--182, 2004.

\bibitem{ShawBA22}
Peter Shaw, Joseph~R. Barr, and Faisal~N. Abu{-}Khzam.
\newblock Anomaly detection via correlation clustering.
\newblock In {\em 16th {IEEE} International Conference on Semantic Computing,
  {ICSC} 2022, Laguna Hills, CA, USA, January 26-28, 2022}, pages 307--313.
  {IEEE}, 2022.

\end{thebibliography}
\end{document}